\documentclass[11pt]{amsart}

\usepackage{amssymb}
\usepackage{epsfig}
\usepackage{comment}
\usepackage{amsmath}
\usepackage{bm}
\usepackage{subcaption}
\usepackage[section]{placeins}
\usepackage{mathrsfs}
\usepackage{graphicx}
\usepackage[notrig]{physics}
\usepackage{units}
\usepackage{cleveref}           
\usepackage{upgreek} 
\usepackage{mathtools}
\usepackage{float}
\usepackage{color}

\newtheorem{thm}{Theorem}[section]

\newtheorem{lem}[thm]{Lemma}

\theoremstyle{definition}

\theoremstyle{remark}

\bibliographystyle{apsrev4-1}

\allowdisplaybreaks

\begin{document}

\title[Optimal layering]{
On the lowest-frequency bandgap\\ of 1D phononic crystals }

\author{
J.~Gonz\'{a}lez-Carbajal, 
M.~Lemm
and 
J.~Garcia-Suarez
}

\address
{
Javier Gonz\'{a}lez-Carbajal: Department of Mechanical and Manufacturing Engineering, University of Sevilla, Spain (current address: Mechanical Engineering, Rey Juan Carlos University, 28933 Móstoles, Madrid, Spain.) \\
Marius Lemm: Department of Mathematics, University of Tübingen, 72076 Tübingen, Germany. \\
Joaquin Garcia-Suarez: Institute of Civil Engineering, \'Ecole polytechnique fédérale de Lausanne (EPFL), 1015 Lausanne, Switzerland. 
}

\email{joaquin.garciasuarez@epfl.ch (corresponding author)}

\begin{abstract}
This manuscript puts forward and verifies an analytical  approach to design phononic crystals that feature a bandgap at the lowest possible frequencies. This new approach is verified against numerical optimization.
It rests on the exact form of the trace of the cumulative transfer matrix.
This matrix arises from the product of $N$ elementary transfer matrices, where $N$ represents the number of layers in the unit cell of the crystal. 
The paper presents first a new proof of the harmonic decomposition of the trace (which, unlike the original derivation, does not resort to group-theoretical concepts), 
and then goes to demonstrate that the long-wavelength asymptotics of the function that governs the dispersion relation can be described in closed form for any layering, plus that it possesses a simple and explicit form that facilitates its study. 
Using this asymptotic result for low frequencies, we then propose an analytical curvature-minimization approach to devise layerings that yield an opening of the first frequency bandgap at the lowest possible frequency value. 
Finally, we also show that minimizing the frequency at which the first gap opens can be at odds with obtaining broadband attenuation, i.e., maximizing the width of the first frequency bandgap.  

\vspace{1cm}

\noindent \textbf{keywords:} phononic crystal, laminate, wave propagation, bandgap, low frequency
\end{abstract}

\maketitle


\section{Introduction}

When it comes to acoustic and elastic wave control, remarkable achievements have been attained during the last decades in the field of metamaterials, see Refs.~\cite{Hussein:2014,Mace:2014,Review:2023} and references therein. 
The possibility of engineering the composition and/or structure of a complex medium in order to achieve exotic behaviors has been documented for, e.g., 
low-frequency bandgaps \cite{Wang:2022,Cao:2022}, surface-to-bulk conversion \cite{Colquitt:2017}, cloaking \cite{cloaking:2011,Amanat:2022}, broadband attenuation \cite{Hussein_design:2006}, wave-based logical gates \cite{Deymier:2011}, implementation of analog computations \cite{analog:2021}, to name a few. 
New numerical techniques beyond finite elements \cite{Orris:1974,Rohan:2009,Shim:2016} or finite differences \cite{Yamg:1974} are also being put forward for better understanding and design. 
By way of illustration, let us mention efficient FFT solvers \cite{Segurado:2023} and ad-hoc optimized discretization schemes \cite{Aragon:2021}, genetic algorithms \cite{Hamza:2006}, and ray tracing techniques borrowed from the seismology community \cite{Dorn:2022}. 
Despite these advances, design of 2D and 3D metamaterials still relies oftentimes on the physical intuition of the designer.

Metamaterials for elastic [acoustic] vibration control are interchangeably termed ``elastic [acoustic] metamaterials'' or ``phononic materials'' \cite{Hussein:2014,Delves:2014}. 
Those consisting of repetitive unit cells are termed ``phononic crystals''  \cite{Kushwaha:1993}. 
There is the implicit scale assumption that the material behaves as a continuum, i.e., atomistic effects do not enter the picture. 
The field of nanophononics \cite{nanophononics:2016,Wood:2023} is occupied with phonon propagation at the atomic scale, where both momentum and heat transport manifest simultaneously. 

In practice, a 1D phononic crystals can be an infinite periodic either laminate or rod \cite{Hussein_analysis:2006}, in which pressure (or decoupled shear waves) propagate perpendicularly to all interfaces. 
The scholar inquiry as to these systems dates back over half a century \cite{Hussein:2014}. 
One of the hurdles impeding deeper appraisals is the sheer mathematical complexity arising from considering anything beyond a couple of layers in the unit cells, 
as each new layer adds at least two mechanical parameters (a material stiffness plus density) and one geometrical one (thickness) for consideration. 
Moreover, those parameters appear into complex trigonometric functions \cite{Shen_&_Cao,Lustig:2018}.  
%
Hence, when considering more than two phases \cite{Hussein_design:2006}, it should not come as a surprise that design strategies based on sophisticated parametric analyses are gaining steam \cite{2019_parametric_study,2020_parametric_analysis}, neither should the fact that machine learning has been considered as a possibility to overcome these obstacles \cite{kaust_nn:2023,neural_network:2019,Luo:2020}.

Conversely, Ref.~\cite{JMPS} proposed an analytical approach founded on application  of transfer matrices.  
Even though its applicability extends to complex wave phenomena (including surface waves and mode conversion \cite{Aki-Richards}),
the transfer matrix formalism is particularly convenient for 1D wave propagation as it entails ``small'' 2-by-2 matrices.
This formalism has been shown to be an ideal means to understand wave propagation in layered media: 
each layer is represented by a matrix that depends on the frequency of the waves and on the material properties of the layer, 
the cumulative effect of subsequent layers is captured in a global matrix that is but the product of the individual ones. 
In particular, Bloch condition \cite{Kittel} for periodic laminates comes expressed in terms of the trace of the global matrix. 
This condition renders an equation (the ``dispersion relation'') relating the wavelength and the frequency within the laminate: 
combinations that yield real-valued wavelengths correspond to ``propagating'' waves (and the velocity at which they propagate), 
while imaginary solutions correspond to ``evanescent'' waves, whose amplitude decay exponentially fast. 
The former forms regions in the frequency domain usually termed ``passbands'', while the latter forms so-called ``stopbands'' or ``bandgaps'' (term of choice in this manuscript). 
Thus, vibration/acoustic insulation can be attained by means of the bandgaps in phononic crystals. 
%
%
In Ref.~\cite{JMPS}, it was proved that the trace, as a function of frequency, can be expressed in closed form, particularly as a superposition of cosine functions (harmonics), 
whose period and amplitude can be fully described in closed-form given a layering order and mechanical properties, for any number of layers. 

Revealing this structure allows new ways to understand performance and design of 1D phononic crystals, one that stems from ``force of reason'' instead of from sheer computational prowess \cite{2019_parametric_study,2020_parametric_analysis,kaust_nn:2023,neural_network:2019}. 
The exact general form of the entries of the global transfer matrix has already been used for applications in geotechnical earthquake engineering featuring layered soils \cite{SDEE:2022}, 
and, by analogy, could also be used for bandgap engineering in, e.g., photonics \cite{Kushwaha:1993}. 

We remark another recent (and complementary) key development in the field: the discovery of the representation on the torus (a finite geometric object) in which the whole band structure of laminates are encapsulated, independently of layering details \cite{Shmuel:2016}. 
By considering the trace of the global transfer matrix to be a function that defines a flow over the torus, exact expressions and bounds for the width of bandgaps were obtained, along with exact statistics regarding the ``gap density'', i.e., the proportion of the complete infinite spectrum that corresponds to bandgaps. 
%
%
Subsequently \cite{Lustig:2018}, some of the results were extended to more complex laminates and questions like ``can the first bandgap in a two-layer laminate be widened by adding a third layer to the unit cell that does not introduce a larger impedance contrast?'' (the answer in this particular case is yes). 
Refs.\,\cite{Shmuel:2016,Lustig:2018} proved that the exploiting toroidal representation allows for new analytical investigations of the problem that were not possible before. 
This technique has also been applied to a particular family quasi-crystalline 1D structures \cite{Gei:2010,Morini:2018,Gei:2020} (``Fibonacci structures''), achieving again remarkable results \cite{Morini:2019}, e.g., identifying the widest bandgap in the spectrum (wich does not have to be the first one necessarily) and widening of the lowest-frequency bandgap of the spectrum. 

In the text, we focus on deriving a criterion to ``optimize'' the first bandgap, in the sense of reducing  as much as possible the frequency at which it opens (we termed this the ``cut-off frequency''). 
It must be highlighted additionally that this contribution does not consider either that some layers may resonate \cite{Liu:2000,Baravelli:2014,Bertoldi:2015,Li:2021} nor that the material possesses an engineered microstructure \cite{Bertoldi:2015,brief_review:2022}, what has been an approach used in the past to attain low-frequency bandgaps. 
We will show later that this is equivalent to minimizing the first frequency for which the half-trace function equals minus one. The proposed criterion is based on a minimum-curvature condition and, very conveniently, we find that the optimal solution can be expressed in an explicit analytical manner.
We begin by providing a new algebraic derivation of the key result (exact general form of half-trace function and its harmonic decomposition), which was previously obtained resorting to group theory, i.e., identifying the transfer matrices as elements of the group $SL(2,\mathbb{R})$ \cite{Hall}. 
%
%


The text is structured as follows: 
\Cref{sec:review} presents the governing equations. 
\Cref{sec:analytical} covers the new proof 
as well as the asymptotic analysis for low frequencies, which leads to the optimization criterion. 
\Cref{sec:numerical_verif} contains three examples where the numerically-optimized layering is compared to the one derived from our criterion. 
We ponder the relation between lowering the frequency of the first bandgaps and maximizing its width in \Cref{sec:broadband}, and, by means of a counterexample, we show that designing for the former does not automatically carry out the latter. 
The manuscript closes with a summary and outlining potential future work in \Cref{sec:final}.

\section{Review}
\label{sec:review}

\subsection{The system}

\Cref{fig:laminate} depicts the system: a representative cell made of $N$ layers of different material and cross-sections, with thicknesses $l_i$, $i=1,...,N$, in an otherwise repetitive system assumed to be infinite.
 
\begin{figure}
\centering\includegraphics[width=0.99\textwidth]{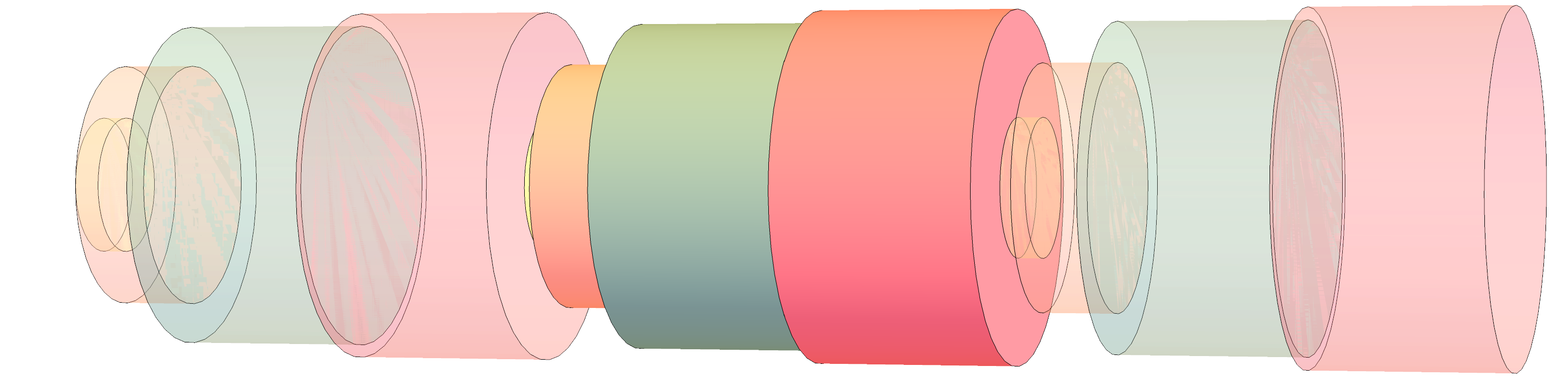}
\caption{Representative unit cell (highlighted) within a periodic rod where longitudinal waves propagate.}
\label{fig:laminate}
\end{figure}

The propagation of longitudinal waves in this system renders a 1D problem as the direction of propagation is perpendicular to all the interfaces between layers (and cross-section dispersive effects are not being considered \cite{Graff}). 
The layer heterogeneity is described by the position-dependence in the cell of the Young's modulus, and the cross-section area. 
Material properties and cross-section are assumed to remain constant in each layer. 
This setting is conceptually similar to the ones studied in Refs. \cite{Shmuel:2016,Lustig:2018}.


\subsection{Governing equations}

In this scenario (rod axial waves, ignoring dispersive effects), the governing equations, in time domain, are
\begin{subequations}
\begin{align}
    {\partial \over \partial x} (\sigma A) 
    =& 
    \rho {\partial^2 u \over \partial t^2} \, , 
    \\
    {\partial u \over \partial x}
    =&
    {\sigma \over E} \, ,
\end{align}
\end{subequations}
where $x$ is the longitudinal space coordinate, $t$ represents time, $\sigma$ is the normal stress to the interfaces, $u$ is the longitudinal displacement, $E$ is the Young modulus of the material the rod layer (``sub-rod'') is made of, $A$ is the cross-section area and $\rho$ must be understood as a linear density, i.e., mass per unit of length. 
Either by taking the Fourier transform or by assuming a plane-wave \textit{ansatz}, the frequency-domain version of these equations are
\begin{subequations}
\begin{align}
    {d \over d x} (\hat{\sigma} A) 
    =& 
    - \rho \omega^2 \hat{u} \, , \\
    {d \hat{u} \over d x}
    =&
    {\hat{\sigma} \over E} \, ,
\end{align}
\end{subequations}
where $\hat{u}$ and $\hat{\sigma}$ represent the amplitude of the displacement and the stress, respectively, and $\omega$ is the circular frequency. 
Recasting in matrix ODE form: for the $k$-th homogeneous portion \cite{Shmuel:2016},
\begin{align} \label{eq:layer_matrix_ODE_rods}
    { d \boldsymbol{f} 
    \over dx}
    = 
    { d \over dx}
    \begin{bmatrix}
    	\hat{u}  \\
        \hat{\sigma} A 
    \end{bmatrix}  
    = 
    \begin{bmatrix}
    0     &1/a_k \\
    - \rho_k \omega^2      &0
    \end{bmatrix}
    \begin{bmatrix}
    	\hat{u}  \\
        \hat{\sigma} A 
    \end{bmatrix}
    =
    \boldsymbol{A}_k \boldsymbol{f}(x)  
    \, ,
\end{align}
where $a_k = E_k A_k$ (do not confuse the $k$-th cross-section area, $A_k$ with the (boldface) $k$-th layer matrix $\boldsymbol{A}_k$). 
The solution, particularized at the end of the homogeneous sub-rod (whose length is $l_k$), is $\boldsymbol{f}(x_k) = \boldsymbol{T}_k (x_k,x_{k-1}) \boldsymbol{f}(x_{k-1})$, where
\begin{align} \label{eq:T_matrix}
    \boldsymbol{T}_k (x_k,x_{k-1})
    &= 
    \exp \left( l_k  
    \boldsymbol{A}_k
    \right) 
    =  
    \begin{bmatrix}
    \cos \left( {\omega l_k \over c_k }\right)     & {\sin \left( {\omega l_k \over c_k }\right) \over \omega \sqrt{\rho_k a_k}} \\
    - \omega \sqrt{\rho_k a_k} \, \sin \left( {\omega l_k \over c_k }\right)      &\cos \left( {\omega l_k \over c_k }\right)
    \end{bmatrix}
    \, , 
\end{align}
and $c_k=\sqrt{a_k/\rho_k}$ is the wave velocity in the $k$-th layer ($\mathrm{exp}$ represents matrix exponentiation).

By recursion, the motion-force vector $[ \hat{u}, \, \hat{\sigma} A ]^\top$ at any position in the cell (say, somewhere within the k-th layer) can be written in terms of the one at the initial point ($x=x_0$), in particular to the end of the $N$-th layer, $x_N$:
\begin{align}
    \boldsymbol{f}(x_N) 
    = 
    \boldsymbol{T}(x_N,x_0) \boldsymbol{f}(x_0) 
    = \boldsymbol{T}_N(x_N,x_{N-1}) 
    \ldots 
    T_k (x_k,x_{k-1}) \ldots T_1 (x_1,x_0)  
    \boldsymbol{f}(x_0) \, .
\end{align}

The matrix $\boldsymbol{T}$ is referred to as ``global transfer matrix'', ``cumulative transfer function'' \cite{Hussein_analysis:2006} or ``cumulative propagator''.


\section{Alternative derivation of entries of cumulative transfer function}
\label{sec:analytical}

\subsection{Harmonic decomposition of the trace}

We begin by presenting both a key result and a new purely-algebraic proof. 

\begin{thm}[Harmonic decomposition]
\label{thm:harmonic_decomposition}
Given any phononic crystal defined by a repetitive cell made up by $N$ layers stacked in a defined way, the trace of the corresponding cumulative transfer matrix can be expressed as
\begin{align} \label{eq:analytical_trace_body}
    \eta(\omega)
    =
    {1 \over 2} \mathrm{trace}(\boldsymbol{T}(L,0))
    =
    \sum_{k=1}^{2^{N-1}} 
    \mathcal{T}_{k} \cos\left( \tau_k \omega \right) \, ,
\end{align}
where the characteristic ``spectral periods'' are given by
\begin{align} \label{eq:periods}
    \tau_j
    =
    \left[
    {l_1 \over c_1}, \, 
    \ldots, \,
    {l_N \over c_N}
    \right]^{\top}
    \cdot
    \boldsymbol{\mathsf{e}}_j \, ,
\end{align}
the $j$-th permutation vector  $\boldsymbol{\mathsf{e}}_j \in \{ -1, \,1 \}^N$ can be any element in the set satisfying $\mathsf{e}_{j,1} = 1$, 
and the ``spectral amplitudes'' are
\begin{align} \label{eq:amplitudes}
    \mathcal{T}_j
    =
    {1 \over 2^{N-1}}
    \sum_{k = 0}^{\lfloor N /2 \rfloor}
    \sum_{|\mathsf{b}|=2 k }
    (-1)^{k+{\mathsf{b}\cdot \boldsymbol{\mathsf{e}}_j \over 2}}
    \upgamma_{\mathsf{b}} \, ,
\end{align}
the coefficient $\upgamma_{\mathsf{b}}$ associated with each binary multi-index \cite{evans} \textcolor{black}{$\mathsf{b} \in \{ 0, \,1 \}^N $} codifies the effect of the impedance contrast at the interfaces, and it is computed as
\begin{align} 
    \upgamma_{\mathsf{b}}
    =
    {1 \over 2}
    \left(
    Z^{\mathrm{f}(\mathsf{b})}
    +
    {1
    \over 
    Z^{\mathrm{f}(\mathsf{b})}}
    \right) \, ,    
\end{align}
where $Z_i = \omega\sqrt{\rho_i \, a_i}$ is the $i$-th layer impedance multiplied by frequency, 
and the map 
\textcolor{black}{
$\mathrm{f}
:
\{0,1\}^N 
\to 
\{-1,0,1\}^N$
} 
takes the multi-index $\mathsf{b}$, entry-wise, to the multi-index $\mathrm{f}(\mathsf{b})$ defined as: 
\begin{itemize}
    \item if $\mathsf{b}_i = 0$, then $\mathrm{f}(\mathsf{b}_i) = 0$,
    \item if $\mathsf{b}_i = 1$ and the previous value assigned by $\mathrm{f}$ to the prior 1-entry in $\mathsf{b}$ was $-1$, then $\mathrm{f}(\mathsf{b}_i) = 1$, else $\mathrm{f}(\mathsf{b}_i) = -1$.
    \item if $\mathsf{b}_i = 1$ and the previous value assigned by $\mathrm{f}$ to the prior 1-entry in $\mathsf{b}$ was $-1$, then $\mathrm{f}(\mathsf{b}_i) = 1$, else $\mathrm{f}(\mathsf{b}_i) = -1$.
\end{itemize}
\end{thm}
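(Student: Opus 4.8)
The plan is to work directly with the factored form of each elementary transfer matrix and expand the matrix product \emph{before} taking the trace. Writing $\theta_k=\omega\,l_k/c_k$ and $Z_k=\omega\sqrt{\rho_k a_k}$, the factor in \eqref{eq:T_matrix} reads
\begin{equation}
\boldsymbol{T}_k=\cos\theta_k\,\boldsymbol{I}+\sin\theta_k\,\boldsymbol{M}_k,\qquad
\boldsymbol{M}_k=\begin{bmatrix}0 & 1/Z_k\\ -Z_k & 0\end{bmatrix},
\end{equation}
with $\boldsymbol{M}_k^2=-\boldsymbol{I}$ and $\det\boldsymbol{M}_k=1$. Substituting this into $\boldsymbol{T}=\boldsymbol{T}_N\cdots\boldsymbol{T}_1$ and expanding multilinearly, I would group the $2^N$ resulting terms by the binary multi-index $\mathsf{b}\in\{0,1\}^N$ that records which layers contribute their $\boldsymbol{M}_k$ (rather than $\boldsymbol{I}$). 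Taking the trace gives
\begin{equation}
\mathrm{trace}(\boldsymbol{T})=\sum_{\mathsf{b}\in\{0,1\}^N}\Big(\prod_{b_j=0}\cos\theta_j\Big)\Big(\prod_{b_j=1}\sin\theta_j\Big)\,\mathrm{trace}\Big(\prod_{b_k=1}\boldsymbol{M}_k\Big),
\end{equation}
where the ordered product of the $\boldsymbol{M}_k$ is taken in decreasing index order, inherited from the ordering of the $\boldsymbol{T}_k$. Everything thus reduces to evaluating the trace of an ordered product of the off-diagonal matrices $\boldsymbol{M}_k$.

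The technical heart of the argument is the lemma
\begin{equation}
\mathrm{trace}\Big(\prod_{b_k=1}\boldsymbol{M}_k\Big)=
\begin{cases}
2\,(-1)^{|\mathsf{b}|/2}\,\upgamma_{\mathsf{b}}, & |\mathsf{b}|\ \text{even},\\[2pt]
0, & |\mathsf{b}|\ \text{odd}.
\end{cases}
\end{equation}
Since each $\boldsymbol{M}_k$ is purely off-diagonal, an ordered product of them is diagonal when $|\mathsf{b}|$ is even and off-diagonal (hence traceless) when $|\mathsf{b}|$ is odd; this already forces the restriction to even $|\mathsf{b}|=2k$ in \eqref{eq:amplitudes}. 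For even length, a short induction (equivalently, a direct chaining of the matrix units $\boldsymbol{E}_{12},\boldsymbol{E}_{21}$) shows the two diagonal entries are reciprocal monomials in the $Z_i$, each carrying the sign $(-1)^k$ coming from the $k$ factors of $-Z_i$ collected along the forced alternation of the two off-diagonal channels. Because $\det\boldsymbol{M}_k=1$ forces the product of the two diagonal entries to equal $1$, they are exactly $(-1)^k Z^{\pm\mathrm{f}(\mathsf{b})}$: the alternating $-1,+1,-1,\dots$ assignment of the map $\mathrm{f}$ is precisely the exponent pattern generated by successive off-diagonal multiplications, and the symmetrization in $\upgamma_{\mathsf{b}}$ absorbs the reciprocal ambiguity, yielding the stated value.

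With the lemma in hand, I would linearize the surviving trigonometric products via $\cos\theta_j=\tfrac12\sum_{e=\pm1}e^{ie\theta_j}$ and $\sin\theta_j=\tfrac1{2i}\sum_{e=\pm1}e\,e^{ie\theta_j}$, turning each term into a sum over sign vectors $\mathsf{e}\in\{-1,1\}^N$ of phases $e^{i\omega\,\mathsf{e}\cdot\boldsymbol{\tau}}$, with $\boldsymbol{\tau}=[l_1/c_1,\dots,l_N/c_N]^\top$. Collecting the coefficient of each phase requires tracking three separate factors of $(-1)^k$ (one from the lemma, one from $i^{-|\mathsf{b}|}$, and one from $\prod_{b_j=1}e_j=(-1)^{k}(-1)^{\mathsf{b}\cdot\mathsf{e}/2}$), which collapse to $(-1)^k$ and, together with the remaining $(-1)^{\mathsf{b}\cdot\mathsf{e}/2}$, reconstruct exactly the exponent $k+\mathsf{b}\cdot\mathsf{e}/2$ appearing in \eqref{eq:amplitudes}; the prefactors $\tfrac12$ combine to the overall $2^{-N}$. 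This identifies the coefficient of each $e^{i\omega\,\mathsf{e}\cdot\boldsymbol{\tau}}$ as the bracketed amplitude of \eqref{eq:amplitudes} evaluated at $\mathsf{e}$.

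Finally, I would invoke the antipodal symmetry $\mathsf{e}\mapsto-\mathsf{e}$: the amplitude is manifestly even in $\mathsf{e}$ (the quantity $\mathsf{b}\cdot\mathsf{e}$ enters only through $(-1)^{\mathsf{b}\cdot\mathsf{e}/2}$), whereas the phase passes to its complex conjugate, so pairing $\mathsf{e}$ with $-\mathsf{e}$ collapses conjugate exponentials into real cosines $\cos(\omega\,\mathsf{e}\cdot\boldsymbol{\tau})$. Selecting the representative with $\mathsf{e}_1=1$ from each pair leaves exactly $2^{N-1}$ terms, matching \eqref{eq:periods}--\eqref{eq:amplitudes} and completing the identification $\eta(\omega)=\sum_k\mathcal{T}_k\cos(\tau_k\omega)$ in \eqref{eq:analytical_trace_body}. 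I expect the sign bookkeeping to be the main obstacle: aligning the $\mathrm{f}$-alternation in the lemma with the several independent $(-1)^k$ contributions of the linearization is the step where a careless computation would most readily go astray.
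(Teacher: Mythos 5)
Your proposal is correct and follows essentially the same route as the paper's proof: multilinear expansion of the factored propagators over binary multi-indices, the parity argument showing odd-degree products are traceless while even-degree products equal $(-1)^{k}\,\mathrm{diag}\bigl(Z^{-\mathrm{f}(\mathsf{b})},Z^{\mathrm{f}(\mathsf{b})}\bigr)$, and a final harmonic linearization folded by the antipodal symmetry $\mathsf{e}\mapsto-\mathsf{e}$. The only cosmetic differences are that you keep $\sin\theta_k$ rather than factoring out all cosines into tangent factors, and you linearize with complex exponentials where the paper uses the real product-to-sum identity; your sign bookkeeping $(-1)^{3k}(-1)^{\mathsf{b}\cdot\mathsf{e}/2}=(-1)^{k+\mathsf{b}\cdot\mathsf{e}/2}$ checks out.
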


\begin{proof}
See that the $i$-th propagator can be written as
\begin{align}
    \boldsymbol{T}_i
    =
    \cos(r_i)
    \left(
    \boldsymbol{I}
    +
    \tan(r_i)
    \begin{bmatrix}
    0 &1/Z_i\\
    -Z_i &0
    \end{bmatrix}
    \right) \, ,
\end{align}
where $r_i=\omega l_i/c_i$, hence the cumulative propagator for a N-layer representative cell:
\begin{align} \label{eq:product_common_factor}
    \boldsymbol{T}
    =
    \prod_{i=1}^N
    \boldsymbol{T}_i
    =
    \prod_{i=1}^N
    \cos(r_i)
    \left(
    \boldsymbol{I}
    +
    \tan(r_i)
    \begin{bmatrix}
    0 &1/Z_i\\
    -Z_i &0
    \end{bmatrix}
    \right) \, .
\end{align}

See how the product of the cosine factors can be effectuated and passed to the left-hand side
\begin{subequations}
\begin{align}
    &\left(
    \prod_{i=1}^N \cos(r_i)
    \right)^{-1}
    \boldsymbol{T}
     =
    \prod_{i=1}^N
    \left(
    \boldsymbol{I}
    +
    \tan(r_i)
    \begin{bmatrix}
    0 &1/Z_i\\
    -Z_i &0
    \end{bmatrix}
    \right) 
    \shortintertext{thus we would have to multiply over all the terms that appear in the N binomials}
    &=\left(
    \boldsymbol{I}
    +
    \tan(r_1)
    \begin{bmatrix}
    0 &1/Z_1\\
    -Z_1 &0
    \end{bmatrix}
    \right)
    \ldots
    \left(
    \boldsymbol{I}
    +
    \tan(r_N)
    \begin{bmatrix}
    0 &1/Z_N\\
    -Z_N &0
    \end{bmatrix}
    \right)
    \shortintertext{this product can be expressed in compact form using multi-index notation, but some of the terms in the expansion are simple enough}
    &=
    \sum_{j=0}^{N}
    \sum_{|\mathsf{b}|=j}
    \tan(r)^\mathsf{b}
    \begin{bmatrix}
    0 &1/Z\\
    -Z &0
    \end{bmatrix}^\mathsf{b} \, , \\
    & =
    \boldsymbol{I}
    +
    \sum_{i=1}^{N}
    \tan(r_i)
    \begin{bmatrix}
    0 &1/Z_i\\
    -Z_i &0
    \end{bmatrix}
        \label{eq:proof_factors}
\\
    & \qquad \quad +
    \sum_{j=2}^{N-1}
    \sum_{|\mathsf{b}|=j}
    \tan(r)^\mathsf{b}
    \begin{bmatrix}
    0 &1/Z\\
    -Z &0
    \end{bmatrix}^\mathsf{b}
     \nonumber
    \\
    & \qquad \quad + 
    \prod_{i=1}^{N}
    \tan(r_i)
    \begin{bmatrix}
    0 &1/Z_i\\
    -Z_i &0
    \end{bmatrix}  \nonumber \, ,
\end{align}
\end{subequations}
where the binary multi-index $\mathsf{b} \in (\{0, \, 1\})^N$ is an $N$-tuple of numbers, each being either 0 or 1.  Therefore, the last term indicates taking the sum over all possible multi-indices of degree ($|\mathsf{b}|$) from 2 to $N-1$ (degree zero, one and N have been shown explicitly and correspond to the first three addends) and for each degree $j$, taking a sum over all possible binary multi-indices of such degree (there are $N! / (j! (N-j)!)$ binary multi-indices of degree $|\mathsf{b}|$, that is, whose N entries sum up to $|\mathsf{b}|$).
\subsection*{Diagonal elements}

Obviously, the first term (identity matrix) is a diagonal matrix, the second one cannot be diagonal, and the last one will be diagonal when $N$ is an even number. Concerning the two last terms, let us put them together in multi-index notation and then we can split into two pieces, one that will yield a diagonal matrix as result and another one that does not:
\begin{align} \label{eq:aux_odd_even}
\begin{split}
    \sum_{j=2}^{N}
    \sum_{|\mathsf{b}|=j}
    \tan(r)^\mathsf{b}
    \begin{bmatrix}
    0 &1/Z\\
    -Z &0
    \end{bmatrix}^\mathsf{b}
    &=
    \sum_{j \in \mathcal{I}_e}
    \sum_{|\mathsf{b}|=j}
    \tan(r)^\mathsf{b}
    \begin{bmatrix}
    0 &1/Z\\
    -Z &0
    \end{bmatrix}^\mathsf{b} \\
    &+
    \sum_{j \in \mathcal{I}_o}
    \sum_{|\mathsf{b}|=j}
    \tan(r)^\mathsf{b}
    \begin{bmatrix}
    0 &1/Z\\
    -Z &0
    \end{bmatrix}^\mathsf{b} \, ,
\end{split}
\end{align}
where $\mathcal{I}_e = \{ j \in [2,3,\dots,N] : j \text{ is even} \}$ $\mathcal{I}_o = \{ j \in [2,3,\dots,N] : j \text{ is odd} \}$. 
See that the set of values of $j$ to consider in this case can be also expressed as 
\begin{align}
    \mathcal{I}_e
    =
    \left[
    2,4,\ldots,
    2 \lfloor 
    N /2 
    \rfloor
    \right]
    =2
    \left[
    1,2,\ldots,
    \lfloor 
    N/2 
    \rfloor
    \right] \ , ,
\end{align}
the use of the floor function $\lfloor \cdot \rfloor$ accounts for the possibility of $N$ being itself even or odd. Likewise, the odd indices can be expressed as
\begin{align}
    \mathcal{I}_e
    =
    \left[
    1,3,\ldots,
    1+
    2 \lfloor 
    (N-1) /2 
    \rfloor
    \right]
    \ , .
\end{align}
Hence, the terms in \cref{eq:aux_odd_even} that yield a diagonal result can be expressed as
\begin{align}
    \sum_{j \in \mathcal{I}_e}
    \sum_{|\mathsf{b}|=j}
    \tan(r)^\mathsf{b}
    \begin{bmatrix}
    0 &1/Z_i\\
    -Z_i &0
    \end{bmatrix}^\mathsf{b}
    =
    \sum_{k = 1}^{\lfloor N/2 \rfloor}
    \sum_{|\mathsf{b}|=2 k }
    \tan(r)^\mathsf{b}
    \begin{bmatrix}
    0 &1/Z\\
    -Z &0
    \end{bmatrix}^\mathsf{b} \, .
\end{align}
For a given $\mathsf{b}$, $|\mathsf{b}|=2k$, take the $a$-th position to be the first non-zero entry, the $a$-th to be the second one, the $a'$-th one the prior to last and $b'$-th the last one. Hence, let us expand the matrix multiplication defined via this multi-index. Assume that the first 1-entry in $\mathsf{b}$ is in the $a$-th position, while the next 1 is in the $b$-th one, the second to last was in the $a'$-th and the last 1-entry corresponded to the $b'$-th entry:  
\begin{subequations}
\begin{align}
    \begin{bmatrix}
    0 &1/Z\\
    -Z &0
    \end{bmatrix}^\mathsf{b}
    & =
    \underbrace{
    \begin{bmatrix}
    0 &1/Z_a\\
    -Z_a &0
    \end{bmatrix}
    \begin{bmatrix}
    0 &1/Z_b\\
    -Z_b &0
    \end{bmatrix}
    }_{\text{first pair}}
    \ldots
    \underbrace{
    \begin{bmatrix}
    0 &1/Z_{a'}\\
    -Z_{a'} &0
    \end{bmatrix}
    \begin{bmatrix}
    0 &1/Z_{b'}\\
    -Z_{b'} &0
    \end{bmatrix}
    }_{\text{last pair}} \, ,
    \shortintertext{grouping pairs, }
    & =
    \underbrace{
    \begin{bmatrix}
    -Z_b/Z_a &0\\
    0 &-Z_a/Z_b
    \end{bmatrix}
    \ldots
    \begin{bmatrix}
    -Z_{b'}/Z_{a'} &0\\
    0 &-Z_{a'}/Z_{b'}
    \end{bmatrix}
    }_{\text{$k$ factors }} \, , 
    \shortintertext{using the notation presented in \cite{JMPS}}
    &=
    (-1)^{|\mathrm{b}|/2}
    \begin{bmatrix}
    Z^{-\mathrm{f}(\mathsf{b})}
    &0\\
    0 
    &Z^{\mathrm{f}(\mathsf{b})}
    \end{bmatrix} \, , \label{eq:diagonal_matrix_ready}
\end{align}
\end{subequations}
where \cite{JMPS} $\mathrm{f}:\{0,1\}^N \to \{-1,0,1\}^N$ takes the multi-index $\mathsf{b}$, entry-wise, to the multi-index $\mathrm{f}(\mathsf{b})$ defined as: 
 \begin{itemize}
        \item if $\mathsf{b}_i = 0$, then $\mathrm{f}(\mathsf{b}_i) = 0$,
        \item if $\mathsf{b}_i = 1$ and it is the first instance of an entry being equal to 1, then $\mathrm{f}(\mathsf{b}_i) = 1$.
        \item if $\mathsf{b}_i = 1$ and the previous value assigned by $\mathrm{f}$ to the prior 1-entry in $\mathsf{b}$ was $-1$, then $\mathrm{f}(\mathsf{b}_i) = 1$, else $\mathrm{f}(\mathsf{b}_i) = -1$.
    \end{itemize}

As an illustrating example, consider $\mathsf{b} = (0,1,0,1,1,0,0,1) \in (\{0,\,1\})^{8}$ (one of the combinations that would appear in a 8-layer laminate), thus $\mathrm{f}(\mathsf{b}) = (0,1,0,-1,1,0,0,-1)$, hence
\begin{align}
    \begin{bmatrix}
    0 &1/Z\\
    -Z &0
    \end{bmatrix}^\mathsf{b}
    =
    +
    \begin{bmatrix}
    { Z_4\,Z_8
    \over
    Z_2\,Z_5} &0 \\
    0 &{ Z_2\,Z_5
    \over
    Z_4\,Z_8}
    \end{bmatrix} \, .
\end{align}

\subsection*{Anti-diagonal elements}

We can also look into the anti-diagonal terms og \cref{eq:proof_factors}. In  similar fashion,
\begin{align}
    \sum_{j \in \mathcal{I}_o}
    \sum_{|\mathsf{b}|=j}
    \tan(r)^\mathsf{b}
    \begin{bmatrix}
    0 &1/Z\\
    -Z &0
    \end{bmatrix}^\mathsf{b}
    =
    \sum_{k = 1}^{\lfloor (N - 1)/2 \rfloor }
    \sum_{|\mathsf{b}|=2 k + 1 }
    \tan(r)^\mathsf{b}
    \begin{bmatrix}
    0 &1/Z\\
    -Z &0
    \end{bmatrix}^\mathsf{b} \, ,
\end{align}
so for a given $\mathsf{b}$ such that $|\mathsf{b}|=2k+1$ (taking advantage of the notation introduced in the prior case), 
\begin{subequations}
\begin{align}
    \begin{bmatrix}
    0 &1/Z\\
    -Z &0
    \end{bmatrix}^\mathsf{b}
    & =
    \underbrace{
    \begin{bmatrix}
    -Z_b/Z_a &0\\
    0 &-Z_a/Z_b
    \end{bmatrix}
    \ldots
    \begin{bmatrix}
    -Z_{b'}/Z_{a'} &0\\
    0 &-Z_{a'}/Z_{b'}
    \end{bmatrix}
    }_{\text{$k$ factors }}
    \begin{bmatrix}
    0 &1/Z_c\\
    -Z_c &0
    \end{bmatrix}
    \, , 
    \shortintertext{which is the result we got before bar one last factor}
    &=
    (-1)^{k}
    \begin{bmatrix}
    Z^{-\mathrm{f}(\mathsf{b})}
    &0\\
    0 
    &Z^{\mathrm{f}(\mathsf{b})}
    \end{bmatrix}
    \begin{bmatrix}
    0 &1/Z_c\\
    -Z_c &0
    \end{bmatrix} \, , \\
    &=
    \begin{bmatrix}
    0
    &{(-1)^{k} \over Z_c}
    Z^{-\mathrm{f}(\mathsf{b})}
    \\
    (-1)^{k+1} Z_c
    Z^{\mathrm{f}(\mathsf{b})}
    &0
    \end{bmatrix}
    \, .
\end{align}
\end{subequations}

Thus the addends corresponding to odd-degree multi-indices can simply be calculated as the immediately prior even-degree multi-index multiplied by the extra entry. 

\subsection*{Closed-form expression for the half-trace function}

Using the previous results, \cref{eq:diagonal_matrix_ready}, the half-trace function can be written as
\begin{align} \label{eq:eta_omega}
    \eta(\omega)
    =
    {1 \over 2}
    \mathrm{trace}
    \left(
    \boldsymbol{T}
    \right)
    =
    \left(
    \prod_{i=1}^N
    \cos(r_i)
    \right)
    \left(
    \sum_{k = 0}^{\lfloor N /2 \rfloor}
    \sum_{|\mathsf{b}|=2 k }
    (-1)^{|\mathsf{b}|/2}
    \gamma_{\mathsf{b}}
    \tan(r)^\mathsf{b}
    \right) \, ,
\end{align}
where the coefficient $\gamma_{\mathsf{b}}$ is, for each multi-index $\mathsf{b}$ and corresponding tangent factor,
\begin{align} \label{eq:Z_b}
    \gamma_{\mathsf{b}}
    =
    {1 \over 2}
    \left(
    Z^{\mathrm{f}(\mathsf{b})}
    +
    Z^{-\mathrm{f}(\mathsf{b})}
    \right) \, ,
\end{align}
and it is computed easily from the rules outlined above. Performing some algebraic manipulations over \cref{eq:eta_omega}, we find
\begin{subequations}
\begin{align}
    \eta(\omega)
    & =
    \left(
    \prod_{i=1}^N
    \cos(r_i)
    \right)
    \left(
    \sum_{k = 0}^{\lfloor N /2 \rfloor}
    \sum_{|\mathsf{b}|=2 k }
    (-1)^{|\mathsf{b}|/2}
    \gamma_{\mathsf{b}}
    \tan(r)^\mathsf{b}
    \right) \, , \\
    &=
    \sum_{k = 0}^{\lfloor N /2 \rfloor}
    \sum_{|\mathsf{b}|=2 k }
    (-1)^{|\mathsf{b}|/2}
    \gamma_{\mathsf{b}}
    \cos(r^\mathsf{b}) \, ,
\end{align}
\end{subequations}
where the new notation $\cos(r^\mathsf{b})$ must be explained before continuing the derivation: as we multiply the cosine factors by each element of the sum that makes up the second factor, some of those cosines will become sines when combined with the tangent factors; however, we can think of the sines as cosines with extra phase $-\pi/2$, thus
\begin{subequations}
\begin{align}
    \cos(r^\mathsf{b})
    &=
    \prod_{i=1}^{N}
    \cos
    \left(
    r_i - \mathsf{b}_i {\pi \over 2}
    \right) \, ,
    \intertext{and can also apply a well-known trigonometric identity termed ``product to sum'' formula so as to reach}
    &=
    {1 \over 2^N}
    \sum_{\boldsymbol{\mathsf{e}}' \in \{-1,1\}^{N}}
    (-1)^{\mathsf{b}\cdot \boldsymbol{\mathsf{e}}'}
    \cos(\boldsymbol{r} \cdot \boldsymbol{\mathsf{e}}') \, ,
    \intertext{where $\boldsymbol{r}$ is just the vector encompassing all $r_i$, from $i=1$ to $i=N$, while $\boldsymbol{\mathsf{e}}'$ is a new binary multi-index, $N$ entries, this time taking $\pm 1$ values; since $\boldsymbol{\mathsf{e}}'$ appears in the argument of a cosine and $|\mathsf{b}|$ is even, each term in the sum appears exactly twice and thus we can restrict the sum to the elements $\boldsymbol{\mathsf{e}} \in \{-1,1\}^{N}$ s.t. $e_1 = 1$, so}
    &=
    {1 \over 2^{N-1}}
    \sum_{\boldsymbol{\mathsf{e}} \in \{-1,1\}^{N}}
    (-1)^{\mathsf{b}\cdot \boldsymbol{\mathsf{e}}}
    \cos(\boldsymbol{r} \cdot \boldsymbol{\mathsf{e}}) \, ,
\end{align} 
\end{subequations}
so, for each multi-index $\mathsf{b}$, the trigonometric factors can be turned into a sum of $2^{N-1}$ cosines terms, whose arguments are combinatorially defined in terms of the $r_i$, always the same ones, the only thing changing from one multi-index to another being the sign with which the cosines appear in the sum.

Plugging the later result in \cref{eq:eta_omega},
\begin{subequations}
\begin{align}
    \eta(\omega)
    &=
    \sum_{k = 0}^{\lfloor N /2 \rfloor}
    \sum_{|\mathsf{b}|=2 k }
    (-1)^{|\mathsf{b}|/2}
    \gamma_{\mathsf{b}}
    \cos(r^\mathsf{b}) \, , \\
    &=
    \sum_{k = 0}^{\lfloor N /2 \rfloor}
    \sum_{|\mathsf{b}|=2 k }
    {(-1)^{|\mathsf{b}|/2} \gamma_{\mathsf{b}} 
    \over 
    2^{N-1}}
    \sum_{\boldsymbol{\mathsf{e}} \in (\{-1,1\})^{N}}
    (-1)^{\mathsf{b}\cdot \boldsymbol{\mathsf{e}}}
    \cos(\boldsymbol{r} \cdot \boldsymbol{\mathsf{e}}) \, ,
    \shortintertext{re-arranging the sums,}
    &=
    \sum_{\boldsymbol{\mathsf{e}} \in \{-1,1\}^{N}}
    \left(
    {1 \over 2^{N-1}}
    \sum_{k = 0}^{\lfloor N /2 \rfloor}
    \sum_{|\mathsf{b}|=2 k }
    (-1)^{|\mathsf{b}|/2+\mathsf{b}\cdot \boldsymbol{\mathsf{e}}}
    \gamma_{\mathsf{b}}
    \right)
    \cos(\boldsymbol{r} \cdot \boldsymbol{\mathsf{e}}) 
     \, , 
     \shortintertext{so instead of summing over the values of the multi-index $\mathsf{b}$ last, we do it first, and lastly over $\boldsymbol{\mathsf{e}}$,}
     & =
     \sum_{j=1}^{2^{N-1}}
     \mathcal{T}_j
     \cos(\tau_j \omega) \, ,
\end{align}
\end{subequations}
thus we reach a harmonic decomposition of the half-trace function. The $j$-th ``harmonic period'' $\tau_j$ can be computed as
\begin{align}
    \tau_j
    =
    \left[
    {l_1 \over c_1}, \, 
    \ldots, \,
    {l_N \over c_N}
    \right]^{\top}
    \cdot
    \boldsymbol{\mathsf{e}}_j \, ,
\end{align}
while the corresponding ``harmonic amplitude'',
\begin{align}
    \mathcal{T}_j
    =
    {1 \over 2^{N-1}}
    \sum_{k = 0}^{\lfloor N /2 \rfloor}
    \sum_{|\mathsf{b}|=2 k }
    (-1)^{|\mathsf{b}|/2+\mathsf{b}\cdot \boldsymbol{\mathsf{e}}}
    \gamma_{\mathsf{b}} \, ,
\end{align}
the coefficient associated to each multi-index was computed via \cref{eq:Z_b}.
\end{proof}


\subsection{The curvature at the origin}
\label{sec:curvature}

This result is presented as a consequence of \cref{thm:harmonic_decomposition}. Recall that  $Z_i = \omega \sqrt{ \rho_i a_i }$ and set $t_i = l_i / c_i$.

\begin{lem}[Curvature]
For any layering, the half-trace function features zero slope at $\omega = 0$, and negative curvature given by
\begin{align} 
\label{eq:kappa}
    \kappa
    =
    \sum_{i = 1}^{N}
    \sum_{j = 1}^{N}
    \left(
    Z_i \over Z_j
    \right)
    t_i t_j
    =
    (\boldsymbol{t} \cdot \boldsymbol{Z})
    (\boldsymbol{t} \cdot \boldsymbol{Z}^{-1})
    \, ,
\end{align}
with the vectors $\boldsymbol{t}=\left(t_1,t_2,\dots,t_N\right)$, $\boldsymbol{Z}=\left(Z_1,Z_2,\dots,Z_N\right)$ and $\boldsymbol{Z}^{-1}=\left(1/Z_1,1/Z_2,\dots,1/Z_N\right)$. 
\end{lem}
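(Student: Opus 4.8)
The plan is to read off the curvature directly from the closed form of the half-trace function established in the proof of \cref{thm:harmonic_decomposition}, namely \cref{eq:eta_omega}, rather than from the fully expanded harmonic series. The zero-slope claim is immediate: every entry of each elementary propagator $\boldsymbol{T}_i$ in \cref{eq:T_matrix} is an even function of $\omega$ (a cosine, or a sine divided/multiplied by $\omega$), so $\eta(\omega)=\tfrac12\,\mathrm{trace}(\boldsymbol{T})$ is even and $\eta'(0)=0$. It then remains to extract the coefficient of $\omega^2$ in the Taylor expansion of $\eta$ about the origin, which equals $\tfrac12\eta''(0)$.

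Working from \cref{eq:eta_omega}, I would substitute $r_i=\omega t_i$ and expand each ingredient to second order: $\prod_{i}\cos r_i = 1-\tfrac{\omega^2}{2}\sum_i t_i^2+O(\omega^4)$ and $\tan r_i=\omega t_i+O(\omega^3)$. The key observation is that a multi-index $\mathsf{b}$ with $|\mathsf{b}|=2k$ contributes a factor $\tan(r)^{\mathsf{b}}=O(\omega^{2k})$, so only the terms $k=0$ and $k=1$ survive modulo $O(\omega^4)$. The $k=0$ term ($\mathsf{b}=\boldsymbol{0}$, $\gamma_{\boldsymbol{0}}=1$) merely reproduces $\prod_i\cos r_i$, while the $k=1$ terms run over the degree-two indices: for $\mathsf{b}$ supported on positions $i<j$ one has $\tan(r)^{\mathsf{b}}=\omega^2 t_i t_j+O(\omega^4)$ and, by the rules defining $\mathrm{f}$, $\gamma_{\mathsf{b}}=\tfrac12\!\left(Z_i/Z_j+Z_j/Z_i\right)$, carried with the sign $(-1)^{1}$.

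Collecting the $\omega^2$ terms then gives
\begin{align*}
\eta(\omega)=1-\frac{\omega^2}{2}\sum_i t_i^2-\omega^2\sum_{i<j}\frac12\left(\frac{Z_i}{Z_j}+\frac{Z_j}{Z_i}\right)t_i t_j+O(\omega^4).
\end{align*}
The assembly step is to recognise the diagonal sum $\sum_i t_i^2=\sum_{i}(Z_i/Z_i)t_i t_i$ as the $i=j$ part of the bilinear form $\sum_{i,j}(Z_i/Z_j)t_i t_j$, and to use the symmetry of $Z_i/Z_j+Z_j/Z_i$ to rewrite $\sum_{i<j}(\cdots)t_i t_j=\tfrac12\sum_{i\neq j}(Z_i/Z_j)t_i t_j$, i.e. its off-diagonal part. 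Adding the two recovers the full double sum, so $\eta(\omega)=1-\tfrac{\omega^2}{2}\,\kappa+O(\omega^4)$ with $\kappa=\sum_{i,j}(Z_i/Z_j)t_i t_j=(\boldsymbol{t}\cdot\boldsymbol{Z})(\boldsymbol{t}\cdot\boldsymbol{Z}^{-1})$. Hence $\eta''(0)=-\kappa$; since $t_i>0$ and $Z_i>0$ both factors are strictly positive, so $\kappa>0$ and the curvature is negative, as claimed. (Note $Z_i/Z_j=\sqrt{\rho_i a_i}/\sqrt{\rho_j a_j}$ is $\omega$-independent, so $\kappa$ is a genuine constant.)

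The main obstacle I anticipate is purely bookkeeping: matching the symmetric degree-two amplitudes $\gamma_{\mathsf{b}}$ against the factored form $(\boldsymbol{t}\cdot\boldsymbol{Z})(\boldsymbol{t}\cdot\boldsymbol{Z}^{-1})$ and, in particular, correctly supplying the diagonal $i=j$ contributions out of $\prod_i\cos r_i$ so that the double sum closes up. As an independent cross-check I would expand the matrix product $\boldsymbol{T}=\prod_i\boldsymbol{T}_i$ itself: writing $\boldsymbol{T}_i=\boldsymbol{M}_i+\omega^2\boldsymbol{N}_i+O(\omega^4)$ with $\boldsymbol{M}_i$ the unipotent static limit, the $\omega^2$ coefficient of the product is $\sum_i(\prod_{<i}\boldsymbol{M})\,\boldsymbol{N}_i\,(\prod_{>i}\boldsymbol{M})$; here the subtlety is that the $\boldsymbol{M}_i$ do not reduce to the identity (their upper-right entries are $O(1)$), but since they are commuting unipotent matrices their products telescope, and taking the trace reproduces the same bilinear form.
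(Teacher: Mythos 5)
Your proposal is correct, and it takes a recognizably different route from the paper's own proof. The paper goes back to the matrix product \cref{eq:product_common_factor} and applies the Leibniz rule twice, computing $\tfrac{d^2}{d\omega^2}\,\mathrm{trace}(\boldsymbol{T})\big|_{\omega=0}$ at the matrix level before halving and comparing with the Taylor expansion; you instead read the $\omega^2$ coefficient directly off the scalar closed form \cref{eq:eta_omega} established inside the proof of \cref{thm:harmonic_decomposition}, noting that only the multi-indices with $|\mathsf{b}|\le 2$ survive to that order and that the degree-two amplitudes are $\gamma_{\mathsf{b}}=\tfrac12(Z_i/Z_j+Z_j/Z_i)$. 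The final bookkeeping is identical in both cases (the diagonal $\sum_i t_i^2$ and the off-diagonal $\sum_{i\neq j}(Z_i/Z_j)t_it_j$ closing up into $(\boldsymbol{t}\cdot\boldsymbol{Z})(\boldsymbol{t}\cdot\boldsymbol{Z}^{-1})$), but your organization buys two things. First, it makes the lemma literally a consequence of \cref{thm:harmonic_decomposition}, as the paper announces but does not quite deliver, and your evenness argument for $\eta'(0)=0$ (every entry of each $\boldsymbol{T}_k$ in \cref{eq:T_matrix} is even in $\omega$) is more elementary than the paper's appeal to the cosine decomposition. Second, it sidesteps a delicate point in the paper's matrix-level computation: the claim that the second factor of \cref{eq:product_common_factor} reduces to the identity at $\omega=0$ is not literally true, since the $(1,2)$ entries $\tan(r_j)/Z_j\to l_j/a_j$ do not vanish there; this is harmless once the trace is taken, and it is precisely the subtlety you flag in your cross-check about the static limits being commuting unipotent matrices rather than the identity. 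In short, your derivation is a valid and arguably cleaner proof of the same second-order Taylor coefficient.
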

Alternatively, the curvature can also be expressed (in terms of design parameters) as
\begin{equation}
\label{eq:kappa_3}
\kappa=\left(\bm{l}\cdot\bm{\rho}\right)\left(\boldsymbol{l}\cdot\bm{a}^{-1}\right)
\end{equation}
where $\bm{l}=\left(l_1,l_2,\dots,l_N\right)$, $\bm{\rho}=\left(\rho_1,\rho_2,\dots,\rho_N\right)$ and $\bm{a}=\left(a_1,a_2,\dots,a_N\right)$.

We note that Eq.~\eqref{eq:kappa_3} is one of the main contributions of the present work. 
This relation constitutes a simple and direct way to calculate the curvature of the half-trace function at the origin, given the values of thickness, density and stiffness for the different components of the unit cell. As we show in the next section, this makes it possible to analytically minimize the curvature, which is at the heart of our novel approach to bandgap design.

\begin{proof}
Starting from \cref{eq:product_common_factor}, let us compute the derivatives of the transfer function at the origin, i.e., $\omega = 0$ or, equivalently, $r_i = 0$ $\forall i = 1, \ldots , N$: 
\begin{subequations}
\begin{align}
    { d^2 \over d \omega^2}
    \boldsymbol{T}
    \Big|_{\omega = 0}
    & =
    { d^2 \over d \omega^2}
    \left[
    \prod_{i=1}^N
    \cos(r_i)
    \prod_{j=1}^N
    \left(
    \boldsymbol{I}
    +
    \tan(r_j)
    \begin{bmatrix}
    0 &1/Z_j\\
    -Z_j &0
    \end{bmatrix}
    \right)
    \right]_{\omega = 0}
    \, , 
    \shortintertext{the product is split in two parts, indexed by $i$ and $j$ }
     = & 
    \left[
    { d^2 \over d \omega^2}
    \prod_{i=1}^N
    \cos(r_i)
    \right]_{\omega = 0}
    \left[
    \prod_{j=1}^N
    \left(
    \boldsymbol{I}
    +
    \tan(r_j)
    \begin{bmatrix}
    0 &1/Z_j\\
    -Z_j &0
    \end{bmatrix}
    \right)
    \right]_{\omega = 0} \\
    & +
    2
    \left[
    {d \over d \omega}
    \prod_{i=1}^N
    \cos(r_i)
    \right]_{\omega = 0}
    \left[
    {d \over d \omega}
    \prod_{j=1}^N
    \left(
    \boldsymbol{I}
    +
    \tan(r_j)
    \begin{bmatrix}
    0 &1/Z_j\\
    -Z_j &0
    \end{bmatrix}
    \right)
    \right]_{\omega = 0} \\
    & +
    \left[
    \prod_{i=1}^N
    \cos(r_i)
    \right]_{\omega = 0}
    \left[
    { d^2 \over d \omega^2}
    \prod_{j=1}^N
    \left(
    \boldsymbol{I}
    +
    \tan(r_j)
    \begin{bmatrix}
    0 &1/Z_j\\
    -Z_j &0
    \end{bmatrix}
    \right)
    \right]_{\omega = 0} \, ,
    \shortintertext{the cross-terms vanish since there always appear a term $\sin (0)$ in the first factor, while the second factor of the first addend and the first factor of the second yield the identity matrix and 1 respectively}
     = & 
    \boldsymbol{I}
    \left[
    { d^2 \over d \omega^2}
    \prod_{i=1}^N
    \cos(r_i)
    \right]_{\omega = 0}
    +
    \left[
    { d^2 \over d \omega^2}
    \prod_{j=1}^N
    \left(
    \boldsymbol{I}
    +
    \tan(r_j)
    \begin{bmatrix}
    0 &1/Z_j\\
    -Z_j &0
    \end{bmatrix}
    \right)
    \right]_{\omega = 0} \, .
\end{align}    
\end{subequations}

The chain rule can be brought to bear again. For the first addend, compute the first derivative first:
\begin{align}
    { d \over d \omega}
    \prod_{i=1}^N
    \cos(r_i)
    =
    - \sum_{i=1}^{N}
    t_i \sin(r_i)
    \left(
    \prod_{j \ne i}
    \cos(r_j)
    \right) \, ,
\end{align}
where $t_i=l_i/c_i$ . Thus, the second derivative at the origin
\begin{subequations}
\begin{align}
    { d^2 \over d \omega^2 }
    \prod_{i=1}^N
    \cos(r_i)
    \Big|_{\omega=0}
    & =
    - \sum_{i=1}^{N}
    {d \over d \omega}
    \left[
    t_i \sin(r_i)
    \prod_{j \ne i}
    \cos(r_j)
    \right]_{\omega = 0}  \, , \\
    & =
    - \sum_{i=1}^{N}
    \Bigg(
    \left[
    {d \over d \omega}
    t_i \sin(r_i)
    \right]_{\omega = 0}
    \left[
    \prod_{j \ne i}
    \cos(r_j)
    \right]_{\omega = 0}
    \nonumber \\ 
     &  \qquad \qquad \qquad \qquad + 
    \left[
    t_i \sin(r_i)
    \right]_{\omega = 0}
    \left[
    {d \over d \omega}
    \prod_{j \ne i}
    \cos(r_j)
    \right]_{\omega = 0}
    \Bigg) \, , 
    \shortintertext{evaluating at $\omega = 0$,}
    &=
    - \sum_{i=1}^{N} t_i^2 = - \Vert\boldsymbol{t}\Vert^{2} \, .
\end{align}    
\end{subequations}
Likewise, the first derivative of the second addend
\begin{align}
    &{d \over d \omega}
    \prod_{i=1}^N
    \left(
    \boldsymbol{I}
    +
    \tan(r_i)
    \begin{bmatrix}
    0 &1/Z_i\\
    -Z_i &0
    \end{bmatrix}
    \right) \nonumber \\
    =
    &\sum_{i=1}^{M}
    {t_i \over \cos^2(r_i)}
    \begin{bmatrix}
    0 &1/Z_i\\
    -Z_i &0
    \end{bmatrix}
    \prod_{i \ne j}
    \left(
    \boldsymbol{I}
    +
    \tan(r_j)
    \begin{bmatrix}
    0 &1/Z_j\\
    -Z_j &0
    \end{bmatrix}
    \right) \, ,
\end{align}
while the second one
\begin{subequations}
\begin{align}
    & {d \over d \omega}
    \left[
    \sum_{i=1}^{N}
    \begin{bmatrix}
    0 &1/Z_i\\
    -Z_i &0
    \end{bmatrix}
    {t_i \over \cos^2(r_i)}
    \prod_{i \ne j}
    \left(
    \boldsymbol{I}
    +
    \tan(r_j)
    \begin{bmatrix}
    0 &1/Z_j\\
    -Z_j &0
    \end{bmatrix}
    \right)
    \right]_{\omega = 0} 
    \, , \\
    & = 
    \sum_{i=1}^{N}
    {d \over d \omega}
    \left[
    \begin{bmatrix}
    0 &1/Z_i\\
    -Z_i &0
    \end{bmatrix}
    {t_i \over \cos^2(r_i)}
    \prod_{i \ne j}
    \left(
    \boldsymbol{I}
    +
    \tan(r_j)
    \begin{bmatrix}
    0 &1/Z_j\\
    -Z_j &0
    \end{bmatrix}
    \right)
    \right]_{\omega = 0} 
    \, , \\
    & = 
    \sum_{i=1}^{N}
    \begin{bmatrix}
    0 &1/Z_i\\
    -Z_i &0
    \end{bmatrix}
    {d \over d \omega}
    \left[
    {t_i \over \cos^2(r_i)}
    \right]_{\omega = 0}
    \prod_{i \ne j}
    \left(
    \boldsymbol{I}
    +
    \tan(r_j)
    \begin{bmatrix}
    0 &1/Z_j\\
    -Z_j &0
    \end{bmatrix}
    \right) \nonumber \\
    & + 
    \sum_{i=1}^{N}
    {t_i \over \cos^2(r_i)}
    {d \over d \omega}
    \left[
    \prod_{i \ne j}
    \left(
    \boldsymbol{I}
    +
    \tan(r_j)
    \begin{bmatrix}
    0 &1/Z_j\\
    -Z_j &0
    \end{bmatrix}
    \right)
    \right]_{\omega = 0}
    \, , \\
    & = 
    \sum_{i=1}^{N}
    \begin{bmatrix}
    0 &1/Z_i\\
    -Z_i &0
    \end{bmatrix}
    \begin{bmatrix}
    0 &1/Z_i\\
    -Z_i &0
    \end{bmatrix}
    \left[
    {2 t_i \tan(r_i) \over \cos^2(r_i)}
    \right]_{\omega = 0}
    \prod_{i \ne j}
    \left(
    \boldsymbol{I}
    +
    \tan(r_j)
    \begin{bmatrix}
    0 &1/Z_j\\
    -Z_j &0
    \end{bmatrix}
    \right) \nonumber \\
    & + 
    \sum_{i=1}^{N}
    \begin{bmatrix}
    0 &1/Z_i\\
    -Z_i &0
    \end{bmatrix}
    {t_i \over \cos^2(r_i)}
    {d \over d \omega}
    \left[
    \prod_{i \ne j}
    \left(
    \boldsymbol{I}
    +
    \tan(r_j)
    \begin{bmatrix}
    0 &1/Z_j\\
    -Z_j &0
    \end{bmatrix}
    \right)
    \right]_{\omega = 0}
    \, , \\
    & = 
    \sum_{i=1}^{N}
    \begin{bmatrix}
    0 &1/Z_i\\
    -Z_i &0
    \end{bmatrix}
    \left[
    {t_i \over \cos^2(r_i)}
    \right]_{\omega = 0} \noindent \\
    & \qquad \qquad
    \sum_{j=1}^{N}
    \begin{bmatrix}
    0 &1/Z_j\\
    -Z_j &0
    \end{bmatrix}
    \left[
    {t_j \over \cos^2(r_j)}
    \right]_{\omega = 0}
    \left[
    \prod_{i \ne k \ne j}
    \left(
    \boldsymbol{I}
    +
    \tan(r_k)
    \begin{bmatrix}
    0 &1/Z_k\\
    -Z_k &0
    \end{bmatrix}
    \right)
    \right]_{\omega = 0}
    \, , \\
    & = 
    \sum_{i=1}^{N}
    \begin{bmatrix}
    0 &1/Z_i\\
    -Z_i &0
    \end{bmatrix}
    t_i
    \sum_{j=1}^{M}
    \begin{bmatrix}
    0 &1/Z_j\\
    -Z_j &0
    \end{bmatrix}
    t_j \, , \\
    & =
    2
    \sum_{i=1}^{N}
    \begin{bmatrix}
    0 &1/Z_i\\
    -Z_i &0
    \end{bmatrix}
    \begin{bmatrix}
    0 &1/Z_j\\
    -Z_j &0
    \end{bmatrix}
    t_i
    t_j 
    = 
    2
    \sum_{i=1}^{N}
    \begin{bmatrix}
    -Z_j/Z_i &0\\
    0 &-Z_i/Z_j
    \end{bmatrix}
    t_i
    t_j 
    \, .
\end{align}    
\end{subequations}


\begin{align}
\label{eq_sec_der_trace}
    { d^2 \over d \omega^2}
    \mathrm{trace}
    \left( \boldsymbol{T} \right)
    \Big|_{\omega = 0}
    =
    - 2 
    \sum_{i=1}^{N}
    t_i^2
    -
    2 
    \sum_{i = 1}^{N}
    \sum_{j \ne i}
    \left(
    Z_i \over Z_j
    \right)
    t_i t_j \, .
\end{align}
the factor of two in the first addend comes from the trace of the identity matrix.

The half-trace function can be expanded around $\omega = 0$ as
\begin{align}
\label{eq_half_trace_taylor}
    \eta(\omega)={1 \over 2}
    \mathrm{trace}
    \left( \boldsymbol{T} \right)
    =
    1 
    - 
    { \kappa \over 2 } \omega^2 
    +
    \mathcal{O}(\omega^4) \, ,
\end{align}
where it has been used that $\eta(0)=1$ (the cumulative transfer matrix at zero frequency, $\boldsymbol{T}|_{\omega = 0}$, has to be equal to the identity matrix~\cite{JMPS}) and that all odd derivatives at the origin vanish ($\eta'(0)=\eta'''(0)=...=0$), since $\eta(\omega)$ is a sum of cosine functions, all of which have vanishing odd derivatives at the origin (see Eq.~\eqref{eq:analytical_trace_body}). By comparing Eqs.~\eqref{eq_sec_der_trace} and~\eqref{eq_half_trace_taylor}, it is clear that the curvature $\kappa$ is given by

\begin{align}
    \kappa
    =
    \sum_{i=1}^{N}
    t_i^2
    +
    \sum_{i = 1}^{N}
    \sum_{j \ne i}
    \left(
    Z_i \over Z_j
    \right)
    t_i t_j
    \label{eq:kappa_0}
    \, ,
\end{align}
where $t_i = l_i / c_i$, $l_i$ is the length of the $i-th$ layer and $c_i = \sqrt{a_i / \rho_i}$ the wave velocity in the material. The time that takes for a wave to traverse each layer can be recast in a vector:
\begin{align*}
    \boldsymbol{t}
    =
    \left(
    {l_1 \over c_1} \, ,
    {l_2 \over c_2} \, ,
    \ldots \, ,
    {l_N \over c_N}
    \right) \, .
\end{align*}

The first addend of \cref{eq:kappa_0} is clearly the $L^2$ norm, but when it comes to the second one is not so straightforward. 
However, see that the two terms can be combined into a single sum
\begin{align} 
    \label{eq:kappa_proof}
    \kappa
    =
    \sum_{i = 1}^{N}
    \sum_{j = 1}^{N}
    \left(
    Z_i \over Z_j
    \right)
    t_i t_j
    =
    (\boldsymbol{t} \cdot \boldsymbol{Z})
    (\boldsymbol{t} \cdot \boldsymbol{Z}^{-1})
    \, ,
\end{align}
with $\boldsymbol{Z}=\left(Z_1,Z_2,\dots,Z_N\right)$ and $\boldsymbol{Z}^{-1}=\left(1/Z_1,1/Z_2,\dots,1/Z_N\right)$. 
Using relations $Z_i = \omega \sqrt{\rho_i a_i}$ and $t_i=l_i/c_i$, Eq.~\eqref{eq:kappa_proof} can be rewritten as
\begin{equation}
\label{eq:kappa_3_proof}
\kappa=\left(\bm{l}\cdot\bm{\rho}\right)\left(\boldsymbol{l}\cdot\bm{a}^{-1}\right)
\end{equation}
where $\bm{l}=\left(l_1,l_2,\dots,l_N\right)$, $\bm{\rho}=\left(\rho_1,\rho_2,\dots,\rho_N\right)$ and $\bm{a}=\left(a_1,a_2,\dots,a_N\right)$. 

\end{proof}

\section{Optimization for given materials}

As was described in the introduction, the goal of this work is to design 1D phononic crystals that feature their first frequency bandgap at the lowest possible frequencies. The bandgaps corresponds to the frequency ranges for which $\vert\eta(\omega)\vert>1$~\cite{JMPS}. 

For the desired optimization, we can leverage the specific shape of the half-trace function near the origin. In particular, consider the following three properties of $\eta(\omega)$, which follow from Eqs.~\eqref{eq_half_trace_taylor} and~\eqref{eq:kappa_3_proof}: $\eta(0)=1$, $\eta'(0)=0$, $\eta''(0)=-\kappa<0$. In other words, the half-trace function starts at 1, with a zero slope and a negative curvature. Hence, in order to minimize the first frequency for which $\eta(\omega)=-1$ (first cut-off frequency), it seems reasonable to maximize the absolute value of the curvature, $\kappa$. Of course, this will not provide an exact solution to the minimization of the first cut-off frequency, since higher order terms are not being considered. However, it will be shown that this approach can provide an accurate approximation to the desired optimum.

It is convenient at this point to rewrite Eq.~\eqref{eq:kappa_3} separating the magnitudes and directions of the different vectors involved:
\begin{equation}
\label{eq:kappa_4}
\kappa=\Vert\bm{l}\Vert^2\Vert\bm{\rho}\Vert\Vert\bm{a}^{-1}\Vert
(\hat{\bm{l}}\cdot\hat{\bm{\rho}})(\hat{\boldsymbol{l}}\cdot\bm{\hat{a}}^{-1}),
\end{equation}
where normalized versions of vectors $\bm{l}$, $\bm{\rho}$ and $\bm{a}^{-1}$ have been introduced: $\hat{\bm{l}}=\bm{l}/\Vert\bm{l}\Vert$, $\hat{\bm{\rho}}=\bm{\rho}/\Vert\bm{\rho}\Vert$, $\bm{\hat{a}}^{-1}=\bm{a}^{-1}/\Vert\bm{a}^{-1}\Vert$.

Fixing $\Vert\boldsymbol{\rho}\Vert$ and $\Vert\boldsymbol{a}^{-1}\Vert$ amounts to deciding on the material of the layers while leaving, on one hand, $\boldsymbol{l}$, i.e., the thickness of these layers, and on the other hand, the ordering of the layers (the order of the entries of $\boldsymbol{\rho}$ and $\boldsymbol{a}^{-1}$) as the variables to maximize over. Choosing the ordering is equivalent to fully defining the vectors $\boldsymbol{a}^{-1}$ and $\boldsymbol{\rho}$.

Assuming that both the materials and the layering order are given, we are interested in finding the thickness values (components of vector $\bm{l}$) that maximize the absolute value of the curvature, $\kappa$. It is clear from Eq.~\eqref{eq:kappa_4} that, if no restriction is imposed on $\Vert\bm{l}\Vert$, the curvature could grow unbounded by just increasing the norm of vector $\bm{l}$, i.e. by increasing the total thickness of the repetitive cell of the crystal. However, in real applications there is always some practical constraint on the maximum allowable size of the crystal. For this reason, it is reasonable to include in the optimization process a restriction such as $\Vert\bm{l}\Vert\leq\Vert\bm{l}\Vert_{max}$. Moreover, the structure of Eq.~\eqref{eq:kappa_4} implies that, under the condition $\Vert\bm{l}\Vert\leq\Vert\bm{l}\Vert_{max}$, the minimum curvature necessarily occurs for $\Vert\bm{l}\Vert=\Vert\bm{l}\Vert_{max}$. 

According to the considerations of the last paragraph, the optimization problem of interest can be stated as: 
\textit{given a set of layers of different materials in a prescribed order 
(i.e., given $\boldsymbol{a}^{-1}$ and $\boldsymbol{\rho}$), 
and a condition as to the maximum length of the unit cell 
(i.e., $\Vert\bm{l}\Vert=\Vert\bm{l}\Vert_{max}$), 
find the thickness for each layer that minimizes the curvature of the trace function at zero frequency 
(i.e. find the unitary vector $\hat{\bm{l}}$ that maximizes $\kappa$, as specified by Eq.~\eqref{eq:kappa_4}).} 

The rest of this subsection is devoted to the derivation of the exact solution to this problem. We present the conclusion as a theorem, which we subsequently prove.

\begin{thm}
    Define the unit cell of phononic crystals up to the thickness of each layer (i.e., $\boldsymbol{a}$ and $\boldsymbol{\rho}$ known, $\boldsymbol{l}$ unknown), 
    provided that the total length satisfies $\norm{\boldsymbol{l}} = \norm{\boldsymbol{l}}_{max}$ (where the later length is specified); 
    the distribution of thicknesses that minimizes the curvature of the half-trace function at zero frequency is $\boldsymbol{l}^* = \norm{\boldsymbol{l}}_{max} \hat{\boldsymbol{l}}^*$, such that
    \begin{align}
\label{eq_analyt_opt_thm}
    \hat{\boldsymbol{l}}^*
    \parallel
    {1 \over 2}
    \left(
    \hat{\boldsymbol{\rho}}
    +
    \boldsymbol{\hat{a}}^{-1}
    \right) \, .
\end{align}
\end{thm}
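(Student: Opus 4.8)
The plan is to strip the constants out of \eqref{eq:kappa_4} and reduce the problem to maximizing a single scalar function of the direction $\hat{\bm l}$ over the unit sphere. Under the stated constraints the factors $\Vert\bm l\Vert^2 = \Vert\bm l\Vert_{max}^2$, $\Vert\bm\rho\Vert$ and $\Vert\bm a^{-1}\Vert$ are all fixed, so maximizing $\kappa$ (equivalently, minimizing $\eta''(0)=-\kappa$, i.e. making the half-trace drop as steeply as possible) is equivalent to maximizing $f(\hat{\bm l}) = (\hat{\bm l}\cdot\hat{\bm\rho})(\hat{\bm l}\cdot\bm{\hat a}^{-1})$ over unit vectors $\hat{\bm l}$. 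To lighten notation I would set $\bm u = \hat{\bm\rho}$ and $\bm v = \bm{\hat a}^{-1}$, two fixed unit vectors, and study $g(\hat{\bm l}) = (\hat{\bm l}\cdot\bm u)(\hat{\bm l}\cdot\bm v)$ on $\Vert\hat{\bm l}\Vert = 1$.

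The key step is a polarization identity: $(\hat{\bm l}\cdot\bm u)(\hat{\bm l}\cdot\bm v) = \tfrac14\bigl[(\hat{\bm l}\cdot(\bm u+\bm v))^2 - (\hat{\bm l}\cdot(\bm u-\bm v))^2\bigr]$. By Cauchy--Schwarz the positive term obeys $(\hat{\bm l}\cdot(\bm u+\bm v))^2 \le \Vert\bm u+\bm v\Vert^2$, with equality if and only if $\hat{\bm l}\parallel\bm u+\bm v$. The decisive observation is that \emph{at precisely that direction} the subtracted term vanishes: because $\Vert\bm u\Vert = \Vert\bm v\Vert = 1$, one has $(\bm u+\bm v)\cdot(\bm u-\bm v) = \Vert\bm u\Vert^2 - \Vert\bm v\Vert^2 = 0$, so $\bm u+\bm v$ is orthogonal to $\bm u-\bm v$. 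Hence choosing $\hat{\bm l}\parallel\bm u+\bm v$ simultaneously maximizes the first term and annihilates the second, which makes it a global maximizer of $g$, with optimal value $\tfrac{1}{2}(1+\bm u\cdot\bm v)$. This yields exactly $\hat{\bm l}^* \parallel \tfrac12(\hat{\bm\rho} + \bm{\hat a}^{-1})$, as claimed in \eqref{eq_analyt_opt_thm}. (An equivalent route is a Lagrange multiplier / Rayleigh-quotient argument: the stationarity condition $(\hat{\bm l}\cdot\bm v)\bm u + (\hat{\bm l}\cdot\bm u)\bm v = 2\lambda\hat{\bm l}$ forces $\hat{\bm l}\in\mathrm{span}\{\bm u,\bm v\}$, and diagonalizing the symmetric form $\tfrac12(\bm u\bm v^\top + \bm v\bm u^\top)$ gives eigenvectors $\bm u\pm\bm v$ with eigenvalues $\tfrac{1\pm(\bm u\cdot\bm v)}{2}$; the larger one selects $\bm u+\bm v$. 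I would present the polarization argument as the main proof since it is coordinate-free and shortest.)

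The step that requires genuine care — and the one I would flag as the main obstacle — is \textbf{feasibility}: the optimization above is unconstrained over the full sphere, whereas physical thicknesses must satisfy $l_i \ge 0$, so a priori the sphere-maximizer could fall outside the admissible orthant. Here the physics rescues the argument: densities and stiffnesses are strictly positive, so both $\hat{\bm\rho}$ and $\bm{\hat a}^{-1}$ lie in the nonnegative orthant, and therefore so does their sum $\bm u+\bm v$. Consequently $\hat{\bm l}^*$ has all nonnegative entries and defines an admissible layering, so the unconstrained optimum coincides with the constrained one and no projection onto the feasible set is needed. I would close by recovering the dimensional answer $\bm l^* = \Vert\bm l\Vert_{max}\,\hat{\bm l}^*$ and noting the degenerate case $\bm u\cdot\bm v = -1$ (where $\bm u+\bm v = \bm 0$) cannot occur precisely because both vectors sit in the positive orthant, so $\bm u\cdot\bm v > 0$ and the optimal direction is always well defined.
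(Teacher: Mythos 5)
Your proof is correct, and it reaches the same conclusion as the paper by a genuinely different route. The paper reduces the problem to the plane spanned by $\hat{\bm{\rho}}$ and $\bm{\hat{a}}^{-1}$ (asserting this restriction as an ``obvious'' first step), builds an orthonormal basis by Gram--Schmidt, parametrizes $\hat{\bm{l}} = \alpha\hat{\bm{w}}_1 + \beta\hat{\bm{w}}_2$, and locates the stationary point of the one-variable function $F(\alpha)=\alpha(\alpha\cos\theta+\sqrt{1-\alpha^2}\sin\theta)$ at $\alpha=\cos(\theta/2)$, i.e.\ the angle bisector. Your polarization identity $(\hat{\bm{l}}\cdot\bm{u})(\hat{\bm{l}}\cdot\bm{v})=\tfrac14[(\hat{\bm{l}}\cdot(\bm{u}+\bm{v}))^2-(\hat{\bm{l}}\cdot(\bm{u}-\bm{v}))^2]$ combined with Cauchy--Schwarz and the orthogonality $(\bm{u}+\bm{v})\perp(\bm{u}-\bm{v})$ for unit vectors is coordinate-free and shorter, and it buys something the paper's argument does not explicitly deliver: a global upper bound $\tfrac12(1+\bm{u}\cdot\bm{v})$ valid over the \emph{entire} sphere that is attained at $\hat{\bm{l}}\parallel\bm{u}+\bm{v}$, so you never need to justify separately that the optimizer lies in the span of the two vectors, nor to check second-order conditions on $F$. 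You also address two points the paper passes over in silence: feasibility (the maximizer has nonnegative entries because $\hat{\bm{\rho}}$ and $\bm{\hat{a}}^{-1}$ lie in the nonnegative orthant, so the unconstrained and physically constrained optima coincide) and nondegeneracy ($\bm{u}+\bm{v}\neq\bm{0}$ since $\bm{u}\cdot\bm{v}>0$), the latter playing the role of the paper's discussion of the parallel case $\theta=0$. Both arguments yield the same optimal value $\cos^2(\theta/2)=\tfrac12(1+\cos\theta)$, so the two are fully consistent.
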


\begin{proof}
Since the points pronged by $\hat{\boldsymbol{l}} \in \mathbb{R}^N$ live on the unitary $(N-1)$-sphere $S^{N-1}$, we can pose the problem as
\begin{align}
\label{eq:opt}
    \max_{ \hat{\boldsymbol{l}} \in S^{N-1} } \,
    (\hat{\boldsymbol{l}} \cdot \hat{\boldsymbol{\rho}})
    (\hat{\boldsymbol{l}} \cdot \boldsymbol{\hat{a}}^{-1})
    \, ,
\end{align}
which can be geometrically interpreted as maximizing the projection of $\hat{\boldsymbol{l}}$ over $\hat{\boldsymbol{\rho}}$ and $\hat{\boldsymbol{a}}^{-1}$ simultaneously. 

To maximize this projection, the first obvious step is to make sure that $\hat{\boldsymbol{l}}$ lies in the hyper-plane spanned by $\hat{\boldsymbol{\rho}}$ and $\boldsymbol{\hat{a}}^{-1}$. Let us compute the cosine of the angle between $\hat{\boldsymbol{\rho}}$ and $\boldsymbol{\hat{a}}^{-1}$, termed $\theta$:
\begin{align}
    \cos \theta
    =
    \hat{\boldsymbol{\rho}}
    \cdot
    \boldsymbol{\hat{a}}^{-1}
    =
    H\, , \label{eq:angle}
\end{align}
where the inner product $\hat{\boldsymbol{\rho}}\cdot\boldsymbol{\hat{a}}^{-1}$ has been denoted as $H$ for convenience.

It is useful to find an orthonormal basis, formed by vectors $\hat{\boldsymbol{w}}_1$ and $\hat{\boldsymbol{w}}_2$, that spans the same plane as $\boldsymbol{\hat{\rho}}$ and $\hat{\boldsymbol{a}}^{-1}$, using Gram-Schmidt \cite{Sheldon:1997} first to orthogonalize the vectors and then normalize them. The first vector can be chosen  
\begin{subequations}
\begin{align}
    \hat{\boldsymbol{w}}_1&= 
    \hat{\boldsymbol{\rho}} , 
    \shortintertext{while the other one}
    \boldsymbol{w}_2 
    &= 
    \boldsymbol{\hat{a}}^{-1} 
    -
    (\boldsymbol{\hat{a}}^{-1} \cdot \hat{\boldsymbol{w}}_1)
    \hat{\boldsymbol{w}}_1
    =
    \boldsymbol{\hat{a}}^{-1} 
    -
    H
    \hat{\boldsymbol{w}}_1 
    \label{eq:w2_first}
    \, ,
\end{align}    
\end{subequations}
see that the norm of the last vector,
\begin{subequations}
\begin{align}
    \Vert\boldsymbol{w}_2\Vert^2
    &=
    \boldsymbol{w}_2 \cdot \boldsymbol{w}_2 
    =
    \left(\boldsymbol{\hat{a}}^{-1} 
    -
    H
    \hat{\boldsymbol{w}}_1
    \right)
    \cdot
    \left(
    \boldsymbol{\hat{a}}^{-1} 
    -
    H
    \hat{\boldsymbol{w}}_1
    \right) \, , \\
    &=
    \Vert\boldsymbol{\hat{a}}^{-1}\Vert^2
    -
    2
    H
    \boldsymbol{\hat{a}}^{-1}
    \cdot
    \hat{\boldsymbol{w}}_1
    +
    H^2
    \, , \\
    &=
    1-2H^2
    +H^2
    \, , \\
    &=
    1-\cos^2\theta = \sin^2\theta
    \, .
\end{align}
\end{subequations}
Two cases should be distinguished at this point. The first one corresponds to condition $\Vert\boldsymbol{w}_2\Vert=0$, which can only happen if $\theta=0$ (the possibility $\theta = \pi$ is ruled out since vectors $\bm{\rho}$ and $\bm{a}^{-1}$ cannot have negative entries). Condition $\theta=0$ is only met if the vector of densities, $\bm{\rho}$, and the vector of flexibilities, $\bm{a}^{-1}$, are parallel (see Eq.~\eqref{eq:angle}). In this improbable case, the solution to the optimization problem posed in Eq.~\eqref{eq:opt} would be directly given by $\hat{\bm{l}}=\hat{\bm{\rho}}=\bm{\hat{a}}^{-1}$. The second case is the general scenario where $\bm{\rho}$ and $\bm{a}^{-1}$ are not parallel vectors. Then, we can define $\hat{\boldsymbol{w}}_2 = \boldsymbol{w}_2 / \Vert\boldsymbol{w}_2\Vert$. From \cref{eq:w2_first} it easily follows that

\begin{align}
    \hat{\boldsymbol{w}}_2 
    =
    {\boldsymbol{\hat{a}}^{-1} \over \sin \theta}
    -
    { \hat{\bm{w}}_1
    \over
    \tan \theta} 
    \to 
    \boldsymbol{\hat{a}}^{-1}
    =
    \cos \theta
    \hat{\boldsymbol{w}}_1
    +
    \sin \theta
    \hat{\boldsymbol{w}}_2 \, .
\end{align}
Vector $\hat{\boldsymbol{l}}$ can be expressed in the orthonormal basis as $\hat{\boldsymbol{l}} = \alpha \hat{\boldsymbol{w}}_1 + \beta \hat{\boldsymbol{w}}_2$, 
where the projections $\alpha$ and $\beta$ (which we are going to maximize over) abide by $\alpha^2 + \beta^2 = 1$ (unit-vector condition).
Passing to compute the inner products,
\begin{align}
    \hat{\boldsymbol{l}} 
    \cdot
    \hat{\boldsymbol{\rho}}
    =
    (\alpha \hat{\boldsymbol{w}}_1 + \beta \hat{\boldsymbol{w}}_2)
    \cdot 
    \hat{\boldsymbol{w}}_1
    =
    \alpha \, ,
\end{align}
whereas
\begin{align}
    \hat{\boldsymbol{l}} 
    \cdot
    \boldsymbol{\hat{a}}^{-1}
    =
    (\alpha \hat{\boldsymbol{w}}_1 + \beta \hat{\boldsymbol{w}}_2)
    \cdot 
    (\cos \theta \hat{\boldsymbol{w}}_1 + \sin \theta \hat{\boldsymbol{w}}_2)
    =
    \alpha \cos \theta + \beta \sin \theta \, .
\end{align}

Thus, the problem of simultaneously maximizing the projection of $\hat{\boldsymbol{l}}$ over $\hat{\boldsymbol{\rho}}$ and $\boldsymbol{\hat{a}}^{-1}$ reduces to finding the maxima of a one-variable function,
\begin{align}
    F(\alpha)
    =
    \alpha (\alpha \cos \theta + \sqrt{1-\alpha^2} \sin \theta) \, ,
\end{align}
where $\theta$ is known since $\hat{\boldsymbol{\rho}}$ and $\boldsymbol{\hat{a}}^{-1}$ are defined (i.e. the materials and the layering order are defined, it remains to assign the thickness of each layer).

It can be verified that $F'(\alpha)=0$ when $\alpha = \cos (\theta/2)$. Thus, $\hat{\boldsymbol{l}} 
= 
\cos (\theta/2) \hat{\boldsymbol{w}}_1 
+ 
\sin (\theta/2) \hat{\boldsymbol{w}}_2$. In other words, in order to maximize the product $(\hat{\boldsymbol{l}} \cdot \hat{\boldsymbol{\rho}})
    (\hat{\boldsymbol{l}} \cdot \boldsymbol{\hat{a}}^{-1})$, vector $\hat{\boldsymbol{l}}$ needs to be placed in the plane spanned by $\hat{\boldsymbol{\rho}}$ and $\boldsymbol{\hat{a}}^{-1}$, exactly in between them.

Geometrically, this means that the optimal direction, $\hat{\boldsymbol{l}}^*$, corresponds to a unitary vector parallel to the mean vector defined by $\hat{\boldsymbol{\rho}}$ and $\boldsymbol{\hat{a}}^{-1}$, as represented in Fig.~\ref{fig:sphere} for a case with 3 layers:
\begin{align}
\label{eq_analyt_opt_b}
    \hat{\boldsymbol{l}}^*
    =
 \frac{  
    \hat{\boldsymbol{\rho}}
    +
    \boldsymbol{\hat{a}}^{-1}}
    {\| \hat{\boldsymbol{\rho}}
    +
    \boldsymbol{\hat{a}}^{-1}\|},
\end{align}
or, equivalently,
\begin{align}
\label{eq_analyt_opt}
    \hat{\boldsymbol{l}}^*
    \parallel
    {1 \over 2}
    \left(
    \hat{\boldsymbol{\rho}}
    +
    \boldsymbol{\hat{a}}^{-1}
    \right) \, .
\end{align}

\end{proof}

\begin{figure}[H]
    \centering
    \includegraphics[width=0.45\textwidth]{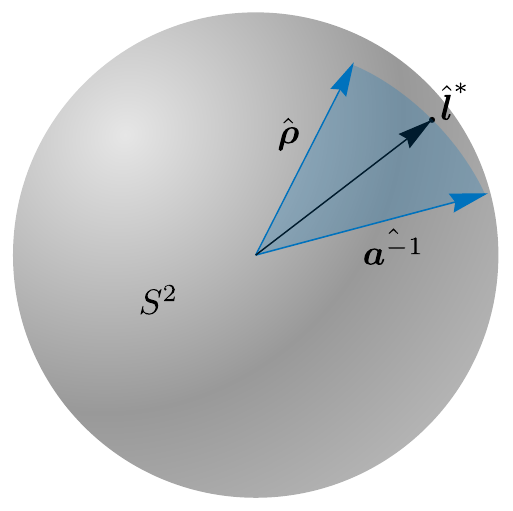}
    \caption{Schematic representation of the solution to the optimization problem given by Eq.~\eqref{eq:opt} (minimization of the curvature of the half-trace function at the origin), for a case with 3 layers.}
    \label{fig:sphere}
\end{figure}

\section{Numerical Verification of optima}
\label{sec:numerical_verif}

The analytical developments of the last section can be summarized as follows: 
given the material properties and layering order for a phononic crystal formed by the periodic repetition of a layered cell, we are interested in finding the set of layer thicknesses that minimizes the first cut-off frequency of the periodically-laminated medium. 
An analytical solution for this optimization problem, based on a second-order Taylor approximation, has been found (Eq.~\eqref{eq_analyt_opt}). 
The present section is intended to verify this analytical solution, by comparing it with a numerically obtained optimal layering.

Three different cases are considered, for which the material properties of the different layers are given in Tables~\ref{table:Case 1}-\ref{table:Case 3}. In each of these scenarios, a numerical optimum has been obtained by using a \textit{surrogate optimization} algorithm in Matlab \cite{Matlab}, which searches for the global minimum of certain objective function. In this case, the objective function is the first cut-off frequency of the half-trace function, $\eta(\omega)$. Note that this approach does not use any Taylor approximations, but the actual half-trace function given by Eq.~\eqref{eq:eta_omega}. As discussed in Section~\ref{sec:curvature}, the optimization problem needs to include a restriction on the norm of the vector of thicknesses: $\Vert \bm{l}\Vert\leq\Vert \bm{l}\Vert_{max}$. For the three presented examples, the maximum norm has been set to \mbox{$\Vert \bm{l}\Vert_{max}=5$ cm}. 

In addition to the numerical and analytical solutions of the optimization problem, a random set of thicknesses has been added to the results for the sake of comparison. This layering has been obtained by assigning to each layer thickness a random variable with uniform distribution between 0 and 1 cm, and then scaling the resulting vector of thicknesses to make $\Vert \bm{l}\Vert=\Vert \bm{l}\Vert_{max}=5$ cm.

Hence, three different half-trace functions are obtained for each of the three cases: the numerical optimum, the analytical optimum and the one corresponding to the random layering. These curves are jointly represented in Fig.~\ref{fig:half-trace functions}. Note that the second-order Taylor approximations of all the half-trace functions are also shown in the graphs (dashed lines). 

\begin{figure}
    \centering
    \includegraphics[width=1\textwidth]{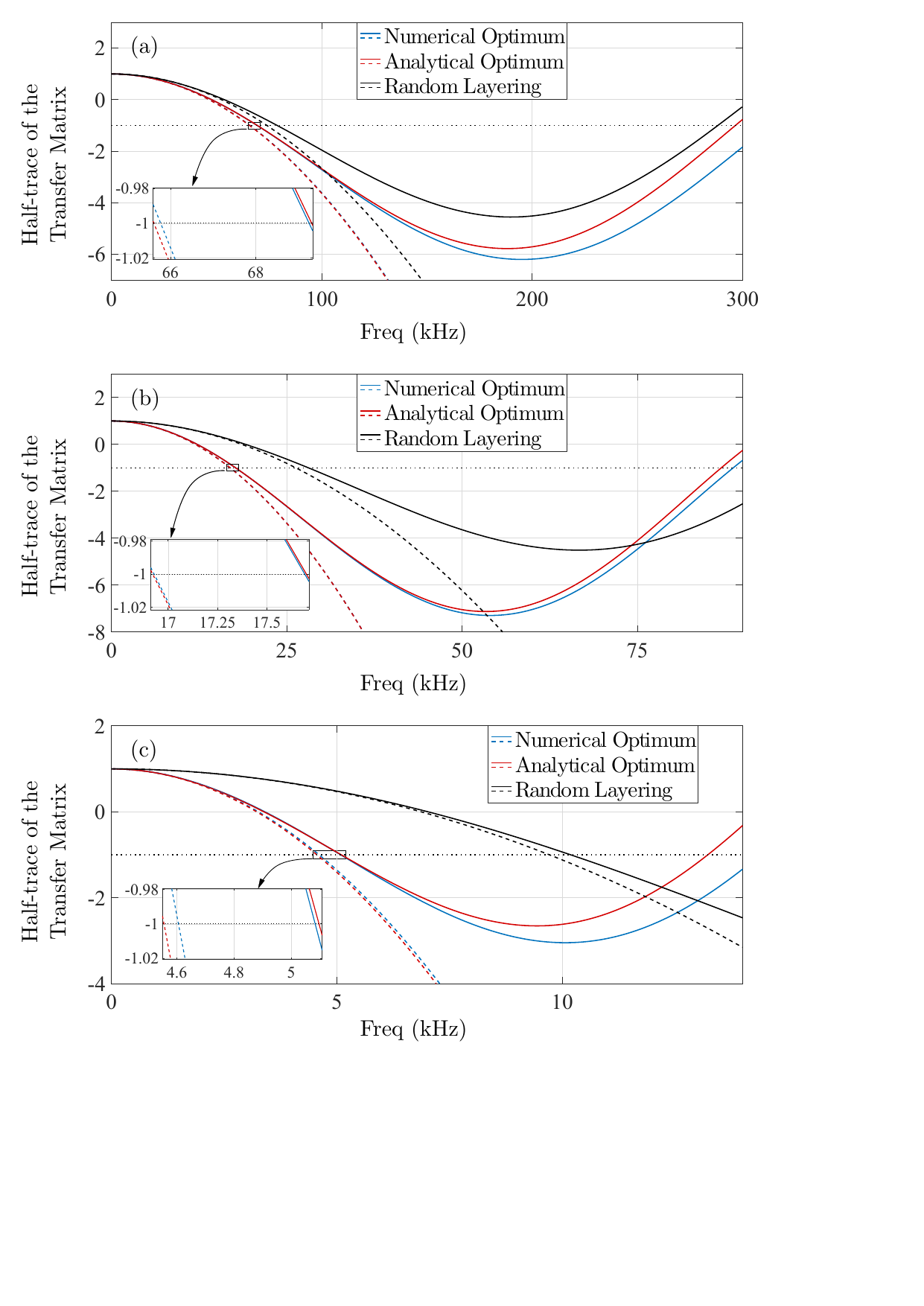}
    \caption{Representation of the half-trace of the transfer matrix against frequency for (a) Case 1, (b) Case 2 and (c) Case 3. Three different layerings are considered: the analytical optimum given by Eq.~\eqref{eq_analyt_opt}, the numerical optimum and a random layering used as a reference. A solid line represents the exact half-trace function, while a dashed line represents the second-order Taylor approximation at the origin.}
    \label{fig:half-trace functions}
\end{figure}

\begin{figure}
    \centering
    \includegraphics[width=0.7609\textwidth]{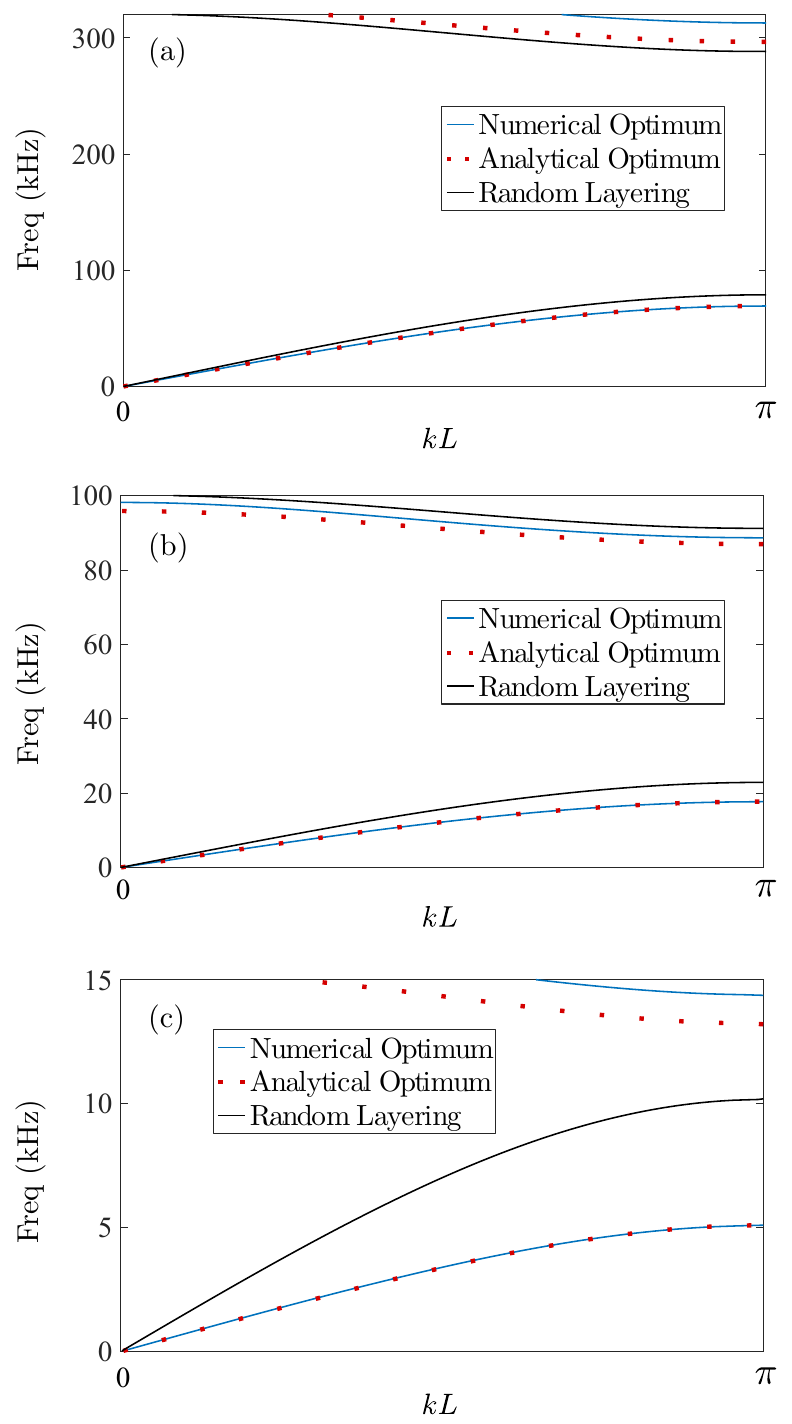}
    \caption{Dispersion diagrams for (a) Case 1, (b) Case 2 and (c) Case 3. The angle $kL$ in the horizontal axes, where $k$ is the wavenumber and $L$ is the total thickness of the repetitive crystal cell, is related to the half-trace function, $\eta(\omega)$, through Eq.~\eqref{eq_bloch}. Three different layerings are considered: the analytical optimum given by Eq.~\eqref{eq_analyt_opt}, the numerical optimum and a random layering used as a reference.}
    \label{fig:dispersion diagrams}
\end{figure}

The following observations can be extracted from Fig.~\ref{fig:half-trace functions}:
\begin{itemize}
    \item [\tiny\textbullet] Both the numerical and analytical optima yield a significant improvement (reduction in the first cut-off frequency) with respect to the random layering (particularly in Cases 2 and 3).
    \item [\tiny\textbullet] The first cut-off frequency provided by the analytical and numerical optimization procedures are very similar, with a slight advantage on the side of the numerical solution. This was expected, since the analytical solution relies upon a Taylor approximation and the numerical optimum does not.
   \item [\tiny\textbullet] Note the details contained in the inset of \Cref{fig:half-trace functions}. 
   If we restrict our attention to the dashed parabolic curves (second-order Taylor approximations of the half-trace functions), the design based on \Cref{eq_analyt_opt} is observed to reach below -1 faster than the quadratic approximation of the numerical optimization. This is also an expected result, considering that the analytical solution minimizes the curvature of the half-trace function at the origin while the numerical optimization is directly tasked with minimizing the cut-off frequency.
\end{itemize}

The information contained in Fig.~\ref{fig:half-trace functions} has also been represented in the form of dispersion diagrams in Fig.~\ref{fig:dispersion diagrams} (in this case the second-order Taylor approximations are not shown). The angle $kL$ in the horizontal axes of Fig.~\ref{fig:dispersion diagrams}, where $k$ is the wavenumber and $L$ is the total thickness of the repetitive crystal cell, is related to the half-trace function, $\eta(\omega)$, through relation
\begin{equation}
\label{eq_bloch}
    \cos(kL)=\eta(\omega),
\end{equation}
which is but a re-statement of Bloch's periodicity condition (see Ref.~\cite{JMPS} for details).

The results presented in this section show that, in the three considered scenarios, the analytical optimum given by Eq.~\eqref{eq_analyt_opt} constitutes a reasonable approximation to the optimal layering obtained with a global numerical optimization procedure, which confirms that minimizing the curvature of the half-trace function at the origin can be a powerful strategy in the design of phononic crystals for minimal cut-off frequency.

To finish this section, we stress that the analysis was predicated on both the materials and the order of the layers being predefined. 
Removing these assumptions would provide extra degrees of freedom to designers, but we are still figuring out how to tackle the problem with a relaxed set of assumptions. 
Hence, open questions to be addressed in future work can be: what is the optimal ordering of layers like? or what is the marginal improvement of adding extra layers to the unit cell?  

\begin{table}[h!]
\centering
\caption{Parameter values for Case 1. Three different layerings are provided: the numerical optimum ($l_i^{num}$), the analytical optimum obtained with Eq.~\eqref{eq_analyt_opt} ($l_i^{an}$) and a random layering ($l_i^{rand}$).}
\label{table:Case 1}
\begin{tabular}{rrrrrr}
Layer & $\rho_i$ (kg/m) & $\quad a_i$ (N) & $l_i^{num}$ (cm) & $l_i^{an}$ (cm) & $l_i^{rand}$ (cm) \\ \hline
1     & 31              & $30\cdot10^9$   & 1.76             & 2.03            & 3.31              \\
2     & 2.9             & $4\cdot10^9$    & 3.52             & 3.37            & 3.52              \\
3     & 55              & $50\cdot10^9$   & 3.08             & 3.09            & 1.29             
\end{tabular}
\end{table}

\begin{table}[h!]
\centering
\caption{Parameter values for Case 2. Three different layerings are provided: the numerical optimum ($l_i^{num}$), the analytical optimum obtained with Eq.~\eqref{eq_analyt_opt} ($l_i^{an}$) and a random layering ($l_i^{rand}$).}
\label{table:Case 2}
\begin{tabular}{rrrrrr}
Layer & $\rho_i$ (kg/m) & $\quad a_i$ (N) & $l_i^{num}$ (cm) & $l_i^{an}$ (cm) & $l_i^{rand}$ (cm) \\ \hline
1     & 100              & $8\cdot10^9$   & 3.50             & 3.46            & 4.66              \\
2     & 5             & $0.5\cdot10^9$    & 2.90             & 2.85            & 1.78              \\
3     & 9              & $0.7\cdot10^9$   & 2.09             & 2.22            & 1.64             
\end{tabular}
\end{table}

\begin{table}[h!]
\centering
\caption{Parameter values for Case 3. Three different layerings are provided: the numerical optimum ($l_i^{num}$), the analytical optimum obtained with Eq.~\eqref{eq_analyt_opt} ($l_i^{an}$) and a random layering ($l_i^{rand}$).}
\label{table:Case 3}
\begin{tabular}{rrrrrr}
Layer & $\rho_i$ (kg/m) & $\quad a_i$ (N) & $l_i^{num}$ (cm) & $l_i^{an}$ (cm) & $l_i^{rand}$ (cm) \\ \hline
1     & 267              & $2.4\cdot10^9$   & 3.46             & 3.09            & 2.80              \\
2     & 5.4             & $2.2\cdot10^9$    & 0.43             & 0.19            & 2.36              \\
3     & 11.8              & $0.3\cdot10^9$   & 0.84             & 1.09            & 1.08             \\
4     & 5.3              & $0.5\cdot10^9$   & 0.83             & 0.63            & 3.23             \\
5     & 76              & $0.1\cdot10^9$   & 3.38             & 3.72            & 0.12             
\end{tabular}
\end{table}

\section{Does this optimal layering also maximize the bandgap width?}
\label{sec:broadband}

One may hope that designing a layering to minimize the frequency for which the first bandgap opens (i.e., $\omega$ s.t. $\eta(\omega)=-1$) also serves to create a broadband gap (another design goal addressed in the literature \cite{broadband:2015,Shmuel:2016,Lustig:2018,Morini:2019,Li:2021}). 
This supposition hinges on the expectation that making the half-trace function to go below $-1$ as ``fast'' as possible also means that it stays below $-1$ for a wide interval of frequencies 

We show that this is generally not the case by means of a counter-example. 
Resorting to a particular case presented in Ref.~\cite{JMPS} (code ``IWTH27''), we pursue the optimization for that case (four layers), both numerically and analytically based on the curvature criterion, imposing the norm $\Vert\bm{l}\Vert$ to be the same as in the IWTH27-layering.
We find, \Cref{fig:IWTH27}, that the original laminate does present a much wider gap which opens at a frequency substantially greater than the one found by our optimization (which again agrees well with the numerically-optimized one). 

Yet another proof is provided by the simplest case, two-layer laminate, for which we know exactly (thanks to the toroidal representation) that the optimal thicknesses of the phases in the unit cell must satisfy $l_2 / l_1 = c_2^2 / c_1^2$ \cite{Shmuel:2016}. 
Conversely, our criterion based on curvature, \cref{eq_analyt_opt_thm}, yields $l_2 / l_1 = \mathcal{K} \cdot c_2^2 / c_1^2$, in which the extra factor takes a complex form dependent on the mechanical properties of the laminates
\begin{align}
    \mathcal{K}
    =
    { \sqrt{(\rho_2/\rho_1)^2 + (c_1/c_2)^2} + \sqrt{(a_1/a_2)^2 + (c_2/c_1)^2}
    \over 
    \sqrt{(\rho_1/\rho_2)^2 + (c_2/c_1)^2} + \sqrt{(a_2/a_1)^2 + (c_1/c_2)^2}
    } \, ,
\end{align}
which is equal to one if the mechanical properties of the phases are equal too.

We venture that a more suitable design criterion for maximum broadband attenuation should be minimizing the value of $\eta(\omega)$ for the first frequency, such that $\eta'(\omega)=0$. 
%
An analytical strategy to ensure a balanced objective (low cut-off frequency while ensuring enough bandwidth) has not been devised yet. 

\begin{figure}[H]
    \centering
    \includegraphics[width=1\textwidth]{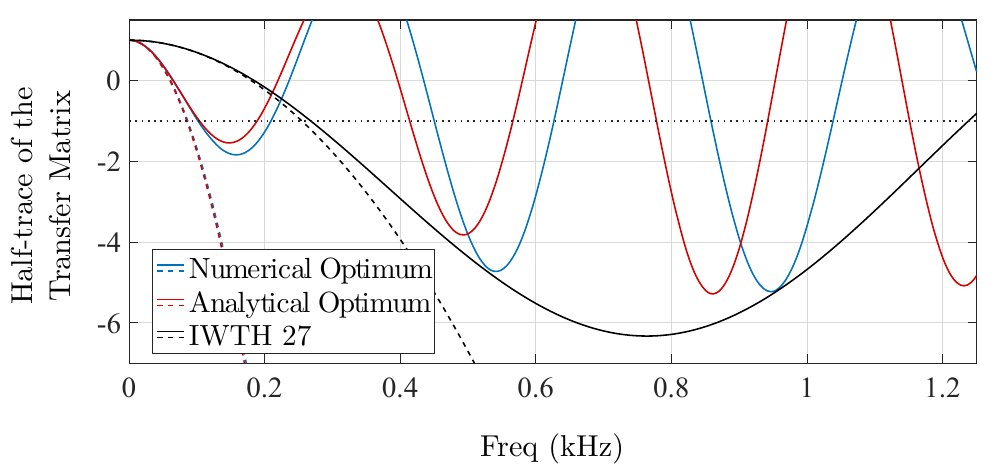}
    \caption{Representation of the half-trace function against frequency for a phononic crystal mirroring the mechanical properties of Kik-Net Site IWTH27 (Ref.~\cite{JMPS}), together with the corresponding analytical and numerical optima. A solid line represents the exact half-trace function, while a dashed line represents the second-order Taylor approximation at the origin.}
    \label{fig:IWTH27}
\end{figure}

\section{Concluding remarks}
\label{sec:final}

We have presented a criterion to minimize the frequency at which the first frequency bandgap opens in 1D phononic crystals. 
Our approach stemmed from the general closed-form expression of cumulative transfer matrices, for which we have also provided a new proof, and is based on minimizing the curvature of the half-trace function at zero frequency. 
A relevant feature of the proposed strategy is that the optimum can be expressed in an explicit analytical manner. Moreover, our analytical approach has been validated via comparison to numerical optimization: in all cases under scrutiny, the analytical result is found to provide an accurate approximation to the numerical one. 
%
It has also been shown that designs that minimize this cut-off frequency do not necessarily yield large bandwidth gaps. 
Future work should focus on devising strategies rooted in the harmonic decomposition that may balance, on one hand, the attainment of broadband attenuation and, on the other hand, doing so at low frequencies. 
Moreover, this should also be tackled under a relaxed set of assumptions, e.g., without pre-defining the ordering of the layers.

The presented results are proof of the deep insights enabled by the adoption of transfer matrices as the building block of the analysis of phononic crystals. 
Such an approach can also help to elucidate more complex wave propagation phenomena, including inclined waves, systems that feature mode conversion and surface waves \cite{JMPS}. 


We have demonstrated one of the practical uses of the transfer matrix mathematical formalism, but we foresee that there will be many more: from designing analog computers \cite{analog:2021} to the aforementioned broadband gap design (\Cref{sec:broadband}) for vibration isolation.

Finally, let us stress that our results are directly translatable to other fields as, per instance, photonic crystals \cite{Kushwaha:1993}.



\section*{Authorship contribution statement}

\noindent JGC: Conceptualization, Formal analysis, Writing – original draft, Numerical validation.\\

\noindent ML: Conceptualization, Mathematical proofs, Formal analysis, Writing – reviewing and editing.\\

\noindent JGS: Conceptualization, Formal analysis, Writing – original draft.

\section*{Data availability}

The Matlab scripts used to generate the numerical results presented  in this work can be obtained via correspondence to the first author.

\bibliography{bibliography}
\bibliographystyle{unsrt}

\end{document}